\documentclass{article} 
\usepackage{nips14submit_e,times}
\usepackage{hyperref}
\usepackage{url}
\usepackage{caption}
\usepackage{naesseth}
\usepackage{abbreviations}
\usepackage{amsthm}
\usepackage[numbers]{natbib}
\renewcommand{\bibfont}{\small}
\usepackage{tikz}
\usepackage{graphicx}
\usepackage{subcaption}
\usepackage{wrapfig}
\usepackage{algorithm}
\usepackage{algorithmic}



\usepackage{color}
\usepackage[textwidth=2cm, textsize=scriptsize]{todonotes}

\newtheorem{lem}{Lemma}

\newcommand\filt{\mathcal{F}}
\newcommand\tcdots{\!\!\cdot\!\cdot\!\cdot\!\!}

\title{Sequential Monte Carlo for Graphical Models}

\author{
Christian A. Naesseth \\
Div. of Automatic Control\\
Link\"oping University\\
Link\"oping, Sweden \\
\texttt{chran60@isy.liu.se} \\
\And
Fredrik Lindsten \\
Dept. of Engineering\\
The University of Cambridge \\
Cambridge, UK \\
\texttt{fsml2@cam.ac.uk} \\
\And
Thomas B. Sch\"on \\
Dept. of Information Technology\\
Uppsala University\\
Uppsala, Sweden \\
\texttt{thomas.schon@it.uu.se} \\
}

%

\nipsfinalcopy 

\begin{document}

\maketitle

\begin{abstract}
We propose a new framework for how to use sequential Monte Carlo (\smc) algorithms for inference in probabilistic graphical models (\pgm). Via a sequential decomposition of the \pgm we find a sequence of auxiliary distributions defined on a monotonically increasing sequence of probability spaces. By targeting these auxiliary distributions using \smc we are able to approximate the full joint distribution defined by the \pgm. One of the key merits of the
 \smc sampler is that it provides an unbiased estimate of the partition function of the model.
We also show how it can be used within a particle Markov chain Monte Carlo framework in order to construct
high-dimensional block-sampling algorithms for general \pgm{s}.
\end{abstract}

\section{Introduction}
\label{sec:introduction}
%
Bayesian inference in statistical models involving a large number of latent random variables is in general a difficult problem. This renders inference methods that are capable of efficiently utilizing structure important tools. Probabilistic Graphical Models (\pgm{s}) are an intuitive and useful way to represent and make use of underlying structure in probability distributions with many interesting areas of applications~\citep{jordan2004graphical}. 

%
Our main contribution is a new framework for constructing non-standard (auxiliary) target distributions of \pgm{s}, utilizing what we call a \emph{sequential decomposition} of the underlying factor graph, to be targeted by a sequential Monte Carlo (\smc) sampler. This construction enables us to make use of \smc methods developed and studied over the last $20$ years, to approximate the full joint distribution defined by the \pgm. 
As a byproduct, the \smc algorithm provides an unbiased estimate of the partition function (normalization constant). We show how the proposed method can be used as an alternative to standard methods such as the Annealed Importance Sampling (\ais) proposed in \citep{neal2001annealed}, when estimating the partition function.
We also make use of the proposed \smc algorithm to design efficient, high-dimensional \mcmc kernels for the latent variables of the \pgm in a particle \mcmc framework. This enables inference about the latent variables as well as learning of unknown model parameters in an \mcmc setting.

%

During the last decade there has been substantial work on how to leverage \smc algorithms \cite{doucet2001sequential} to solve inference problems in \pgm{s}. The first approaches were PAMPAS \citep{isard2003pampas} and nonparametric belief propagation by \citet{sudderth2003nonparametric, sudderth2010nonparametric}. Since then, several different variants and refinements have been proposed by \eg \citet{briers2005sequential, ihler2009particle, frank2009particle}. They all rely on various particle approximations of messages sent in a loopy belief propagation algorithm. This means that in general, even in the limit of Monte Carlo samples, they are approximate methods. Compared to these approaches our proposed methods are consistent and provide an unbiased estimate of the normalization constant as a by-product.

Another branch of \smc-based methods for graphical models has been suggested by \citet{hamze2005hot}. Their method builds on the \smc sampler by \citet{del2006sequential}, where the initial target is a spanning tree of the original graph and subsequent steps add edges according to an annealing schedule. \citet{everitt2012bayesian} extends these ideas to learn parameters using particle \mcmc \cite{andrieu2010particle}. Yet another take is provided by \citet{de2007conditional}, where an \smc sampler is combined with mean field approximations. Compared to these methods we can handle both non-Gaussian and/or non-discrete interactions between variables and there is no requirement to perform \mcmc steps within each \smc step.

The left-right methods described by \citet{wallach2009evaluation} and extended by \citet{buntine2009estimating} to estimate the likelihood of held-out documents in topic models are somewhat related in that they are \smc-inspired. However, these are not actual \smc algorithms and they do not produce an unbiased estimate of the partition function for finite sample set. On the other hand, a particle learning based approach was recently proposed by \citet{scott2013arecursive} and it can be viewed as a special case of our method for this specific type of model.


\section{Graphical models} 
\label{sec:graphicalmodels}
A graphical model is a probabilistic model which {\em factorizes} according
to the structure of an underlying graph $\G = \{\V, \E\}$, with vertex set $\V$ and edge set
$\E$. By this we mean that the joint probability density function (\pdf) of the set of random variables indexed by $\V$,
$\xV \eqdef \crange{x_1}{x_{|\V|}}$, can be represented as a product of factors over the cliques of the
graph:
\begin{equation}
p( \xV ) = \frac{1}{Z} \prod_{C \in \C} \psi_C (\X_C),
\label{eq:pfactors}
\end{equation}
where $\C$ is the set of cliques in $\G$, $\psi_C$ is the factor for clique $C$,
and $Z = \int \prod_{C \in \C} \psi_C (x_C) \rmd\xV$ is the partition function. 

\begin{wrapfigure}{r}{0.35\textwidth}
\vspace{-10pt}
\centering
\begin{subfigure}[b]{0.35\textwidth}
\centering
\begin{tikzpicture}[node distance=1.06cm,main node/.style={circle,draw, scale=0.7}]
  \node[main node] (1) {$x_1$};
  \node[main node] (2) [right of=1] {$x_2$};
  \node[main node] (3) [above right of=2] {$x_3$};
  \node[main node] (4) [below right of=2] {$x_4$};
  \node[main node] (5) [below right of=3] {$x_5$};

  \path (1) edge (2);
  \path (2) edge (3);
  \path (2) edge (4);
  \path (3) edge (4);
  \path (3) edge (5);
  \path (4) edge (5);
\end{tikzpicture}
\caption{Undirected graph.}
\label{fig:graph}
\end{subfigure}\vspace{5pt}
\begin{subfigure}[b]{0.35\textwidth}
\centering
\begin{tikzpicture}[node distance=1cm,main node/.style={circle,draw, scale=0.7},factor node/.style={rectangle,draw, scale=0.7}]
  \node[main node] (1) {$x_1$};
  \node[factor node] (2) [right of=1] {$\psi_1$};
  \node[main node] (3) [right of=2] {$x_2$};
  \node[factor node] (4) [right of=3] {$\psi_2$};
  \node[main node] (5) [above right of=4] {$x_3$};
  \node[main node] (6) [below right of=4] {$x_4$};
  \node[factor node] (7) [right of=5] {$\psi_3$};
  \node[factor node] (8) [right of=6] {$\psi_4$};
  \node[main node] (9) [below right of=7] {$x_5$};
  \node[factor node] (10) [right of=9] {$\psi_5$};

  \path (1) edge (2);
  \path (2) edge (3);
  \path (3) edge (4);
  \path (4) edge (5);
  \path (4) edge (6);
  \path (5) edge (7);
  \path (7) edge (9);
  \path (6) edge (8);
  \path (8) edge (9);
  \path (9) edge (10);
\end{tikzpicture}
\caption{Factor graph.}
\label{fig:factorgraph}
\end{subfigure}
\caption{Undirected \pgm and a corresponding
  factor graph.}
\vspace{-30pt}
\label{fig:PGM}
\vspace{-25pt}
\end{wrapfigure}
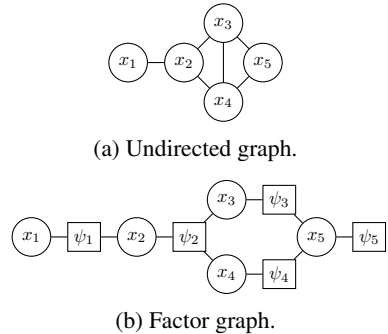

We will frequently use the notation $\X_I = \bigcup_{i \in I}\{x_i\}$ for some subset $I \subseteq
\set{1}{|\V|}$ and we write $\setX_I$ for the range of $\X_I$ (\ie, $\X_I \in \setX_I)$.
To make the interactions between the random variables explicit we define a {\em factor
graph} $\F = \{\V, \Psi, \E' \}$ corresponding to $\G$. The factor graph consists of two types of
vertices, the original set of random variables $\xV$ and the factors $\Psi = \{\psi_C : C \in
\C\}$. The edge set $\E'$ consists only of edges from variables to factors. In
Figure~\ref{fig:graph} we show a simple toy example of an undirected graphical model, and one possible corresponding factor graph, Figure~\ref{fig:factorgraph}, making the dependencies explicit. Both directed and undirected graphs can be represented by factor graphs.


\section{Sequential Monte Carlo}\label{sec:smc}
In this section we propose a way to sequentially decompose a graphical model which we then make use of
to design an \smc algorithm for the \pgm.

\subsection{Sequential decomposition of graphical models}\label{sec:seqdecomp}
\smc methods can be used to approximate a sequence of probability distributions on a sequence of
probability spaces of increasing dimension.  This is done by recursively updating a set of 
samples---or \emph{particles}---with corresponding nonnegative importance weights.
The typical scenario is that of state inference in state-space models, where the probability
distributions targeted by the \smc sampler are the joint smoothing distributions of a sequence of
latent states conditionally on a sequence of observations; see \eg, \citet{DoucetJ:2011} for
applications of this type.  However, \smc is not limited to these cases and it is applicable to a much
wider class of models.

To be able to use \smc for inference in \pgm{s} we have to define a sequence of 
target distributions. However, these target distributions
\emph{do not} have to be marginal distributions under $p(\xV)$.
Indeed, as long as the sequence of target distributions is constructed in such a way
that, at some final iteration, we recover $p(\xV)$, all the intermediate target distributions
may be chosen quite arbitrarily.

This is key to our development, since it lets us use the structure of the \pgm
to define a sequence of intermediate target distributions for the sampler.
We do this by a so called {\em sequential decomposition} of the graphical model.
This amounts to simply adding factors to the target distribution, from the product of factors in~\eqref{eq:pfactors}, at each step of the algorithm and iterate until all
the factors have been added.
Constructing an artificial sequence of intermediate target distributions for an \smc sampler is a
simple, albeit underutilized, idea as it opens up for using \smc samplers
for inference in a wide range of probabilistic models; see \eg, \citet{Bouchard-CoteSJ:2012,del2006sequential} for a few applications of this approach.

Given a graph $\G$ with cliques $\C$, let $\{\psi_k\}_{k=1}^K$ be a sequence of factors defined as follows
$\psi_k(\xk) = \prod_{C \in \C_k} \psi_C (\X_C)$, 
where $\C_k \subset \C$ are chosen such that $\bigcup_{k=1}^K \C_k = \C$ and $\C_i \cap \C_j = \emptyset,~i \neq j$,
and where $\ind_k \subseteq \set{1}{|\V|}$ is the index set of the variables in the domain of $\psi_k$,
  $\ind_k = \bigcup_{C\in \C_k} C$. 
We emphasize that the cliques in $\C$ need not be maximal. In fact even auxiliary factors may be introduced to allow for \eg annealing between distributions.
It follows that the \pdf in \eqref{eq:pfactors} can be written as
  $p(\xV) = \frac{1}{Z}\prod_{k = 1}^K \psi_k(\xk)$. 
Principally, the choices and the ordering of the $\C_k$'s is arbitrary, but in practice it
will affect the performance of the proposed sampler. However, in many common \pgm{s} an intuitive ordering can be deduced
from the structure of the model, see Section~\ref{sec:experiments}.

The sequential decomposition of the \pgm is then based on the auxiliary quantities
 $ \tgammak (\Xk) \eqdef \prod_{\ell = 1}^k \psi_\ell (\X_{\ind_\ell})$,
with $\Ind_k \eqdef \bigcup_{\ell=1}^k \ind_\ell$,  for $k \in \set{1}{K}$.
By construction, $\Ind_K =\V$ and the joint \pdf $p(\Xk[K])$ will be proportional to $\tgamma_K(\Xk[K])$.
Consequently, by using 
$  \tgammak (\Xk)$
as the basis for the target sequence for an \smc sampler,
we will obtain the correct target distribution at iteration $K$.
%
However, a further requirement for this to be possible is that all the functions in the sequence
are normalizable. For many graphical models this is indeed the case, and then we can use $\tgammak (\Xk) $, $k = 1$ to $K$, 
directly as our sequence of intermediate target densities. If, however, $\int \tgammak(\Xk)\rmd \Xk = \infty$ for some $k < K$,
an easy remedy is to modify the target density to ensure normalizability.
This is done by setting $\gammak(\Xk) = \tgammak(\Xk) q_k(\Xk)$, where $q_k(\Xk)$ is choosen so that
$  \int \gammak(\Xk) \rmd \Xk < \infty$.
We set $q_K(\Xk[K]) \equiv 1$ to make sure that $\gamma_K( \Xk[K] ) \propto p(\Xk[k])$.
Note that the integral $\int \gammak(\Xk) \rmd \Xk$ 
need not be computed explicitly,
as long as it can be established that it is finite. With this modification we obtain
a
sequence of unnormalized intermediate target densities for the \smc sampler as $\gamma_1(\Xk[1]) =
q_1(\Xk[1]) \psi_1 (\Xk[1])$ and $\gammak(\Xk) = \gamma_{k-1}(\Xk[k-1]) \frac{q_k(\Xk)}{q_{k-1}(\Xk[k-1])} \psi_k (\xk)$
for $k = \range{2}{K}$. The corresponding normalized \pdf{s} are given by $\bar{\gamma}_k(\Xk) =
\gammak(\Xk)/Z_k$, where $Z_k = \int \gammak(\Xk) \rmd \Xk$.
Figure~\ref{fig:decomp} shows two examples of possible subgraphs when applying
the decomposition, in two different ways, to the factor graph example in Figure~\ref{fig:PGM}.

\newcommand{\sca}{0.6}
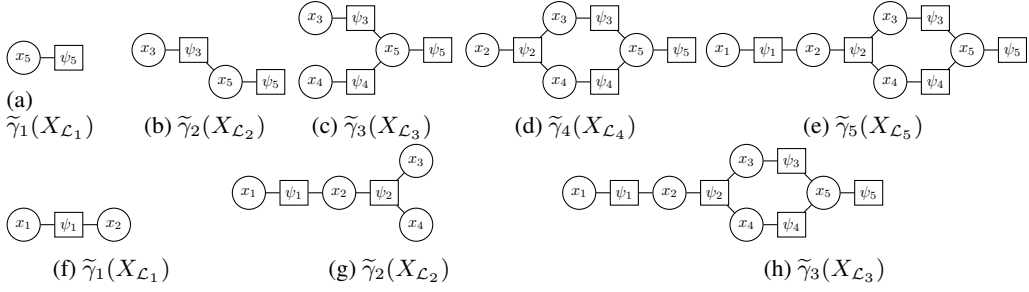
\begin{figure}[t]
\begin{center}
\begin{subfigure}[b]{1.0\textwidth}
\begin{subfigure}[b]{0.1\textwidth}
\begin{tikzpicture}[node distance=1cm,main node/.style={circle,draw,scale=\sca},factor node/.style={rectangle,draw,scale=\sca}]
  \node[main node] (9) {$x_5$};
  \node[factor node] (10) [right of=9] {$\psi_5$};
  \path (9) edge (10);
\end{tikzpicture}
\caption{$\tgamma_1 (\Xk[1])$}
\end{subfigure}
~
\begin{subfigure}[b]{0.14\textwidth}
\begin{tikzpicture}[node distance=1cm,main node/.style={circle,draw,scale=\sca},factor node/.style={rectangle,draw,scale=\sca}]
  \node[main node] (5)  {$x_3$};
  \node[factor node] (7) [right of=5] {$\psi_3$};
  \node[main node] (9) [below right of=7] {$x_5$};
  \node[factor node] (10) [right of=9] {$\psi_5$};
  \path (5) edge (7);
  \path (7) edge (9);
  \path (9) edge (10);
\end{tikzpicture}
\caption{$\tgamma_2 (\Xk[2])$}
\end{subfigure}
~
\begin{subfigure}[b]{0.14\textwidth}
\begin{tikzpicture}[node distance=1cm,main node/.style={circle,draw,scale=\sca},factor node/.style={rectangle,draw,scale=\sca}]
  \node[main node] (5) {$x_3$};
  \node[factor node] (7) [right of=5] {$\psi_3$};
  \node[factor node] (8) [below left of=9] {$\psi_4$};
  \node[main node] (9) [below right of=7] {$x_5$};
  \node[main node] (6) [left of=8] {$x_4$};
  \node[factor node] (10) [right of=9] {$\psi_5$};
  \path (5) edge (7);
  \path (7) edge (9);
  \path (6) edge (8);
  \path (8) edge (9);
  \path (9) edge (10);
\end{tikzpicture}
\caption{$\tgamma_3 (\Xk[3])$}
\end{subfigure}
~
\begin{subfigure}[b]{0.21\textwidth}
\begin{tikzpicture}[node distance=1cm,main node/.style={circle,draw,scale=\sca},factor node/.style={rectangle,draw,scale=\sca}]
  \node[factor node] (8) {$\psi_4$};
  \node[main node] (6) [left of=8] {$x_4$};
  \node[factor node] (4) [above left of=6] {$\psi_2$};
  \node[main node] (3) [left of=4] {$x_2$};
  \node[main node] (5) [above right of=4] {$x_3$};
  \node[factor node] (7) [right of=5] {$\psi_3$};
  \node[main node] (9) [below right of=7] {$x_5$};
  \node[factor node] (10) [right of=9] {$\psi_5$};
  \path (3) edge (4);
  \path (4) edge (5);
  \path (4) edge (6);
  \path (5) edge (7);
  \path (7) edge (9);
  \path (6) edge (8);
  \path (8) edge (9);
  \path (9) edge (10);
\end{tikzpicture}
\caption{$\tgamma_4 (\Xk[4])$}
\end{subfigure}
~
\begin{subfigure}[b]{0.3\textwidth}
\begin{tikzpicture}[node distance=1cm,main node/.style={circle,draw,scale=\sca},factor node/.style={rectangle,draw,scale=\sca}]
  \node[main node] (1) {$x_1$};
  \node[factor node] (2) [right of=1] {$\psi_1$};
  \node[main node] (3) [right of=2] {$x_2$};
  \node[factor node] (4) [right of=3] {$\psi_2$};
  \node[main node] (5) [above right of=4] {$x_3$};
  \node[main node] (6) [below right of=4] {$x_4$};
  \node[factor node] (7) [right of=5] {$\psi_3$};
  \node[factor node] (8) [right of=6] {$\psi_4$};
  \node[main node] (9) [below right of=7] {$x_5$};
  \node[factor node] (10) [right of=9] {$\psi_5$};

  \path (1) edge (2);
  \path (2) edge (3);
  \path (3) edge (4);
  \path (4) edge (5);
  \path (4) edge (6);
  \path (5) edge (7);
  \path (7) edge (9);
  \path (6) edge (8);
  \path (8) edge (9);
  \path (9) edge (10);
\end{tikzpicture}
\caption{$\tgamma_5 (\Xk[5])$}
\end{subfigure}
\end{subfigure}

\begin{subfigure}[b]{1.0\textwidth}
\begin{subfigure}[b]{0.2\textwidth}
\begin{tikzpicture}[node distance=1cm,main node/.style={circle,draw,scale=\sca},factor node/.style={rectangle,draw,scale=\sca}]
  \node[main node] (1) {$x_1$};
  \node[factor node] (2) [right of=1] {$\psi_1$};
  \node[main node] (3) [right of=2] {$x_2$};
 
  \path (1) edge (2);
  \path (2) edge (3);
\end{tikzpicture}
\caption{$\tgamma_1 (\Xk[1])$}
\end{subfigure}
~
\begin{subfigure}[b]{0.3\textwidth}
\begin{tikzpicture}[node distance=1cm,main node/.style={circle,draw,scale=\sca},factor node/.style={rectangle,draw,scale=\sca}]
  \node[main node] (1) {$x_1$};
  \node[factor node] (2) [right of=1] {$\psi_1$};
  \node[main node] (3) [right of=2] {$x_2$};
  \node[factor node] (4) [right of=3] {$\psi_2$};
  \node[main node] (5) [above right of=4] {$x_3$};
  \node[main node] (6) [below right of=4] {$x_4$};

  \path (1) edge (2);
  \path (2) edge (3);
  \path (3) edge (4);
  \path (4) edge (5);
  \path (4) edge (6);
\end{tikzpicture}
\caption{$\tgamma_2 (\Xk[2])$}
\end{subfigure}
~
\begin{subfigure}[b]{0.5\textwidth}
\begin{tikzpicture}[node distance=1cm,main node/.style={circle,draw,scale=\sca},factor node/.style={rectangle,draw,scale=\sca}]
  \node[main node] (1) {$x_1$};
  \node[factor node] (2) [right of=1] {$\psi_1$};
  \node[main node] (3) [right of=2] {$x_2$};
  \node[factor node] (4) [right of=3] {$\psi_2$};
  \node[main node] (5) [above right of=4] {$x_3$};
  \node[main node] (6) [below right of=4] {$x_4$};
  \node[factor node] (7) [right of=5] {$\psi_3$};
  \node[factor node] (8) [right of=6] {$\psi_4$};
  \node[main node] (9) [below right of=7] {$x_5$};
  \node[factor node] (10) [right of=9] {$\psi_5$};

  \path (1) edge (2);
  \path (2) edge (3);
  \path (3) edge (4);
  \path (4) edge (5);
  \path (4) edge (6);
  \path (5) edge (7);
  \path (7) edge (9);
  \path (6) edge (8);
  \path (8) edge (9);
  \path (9) edge (10);
\end{tikzpicture}
\caption{$\tgamma_3 (\Xk[3])$}
\end{subfigure}
\end{subfigure}
\caption{Examples of five- (top) and three-step (bottom) sequential decomposition of Figure \ref{fig:PGM}.}
\label{fig:decomp}
\end{center}
\vspace{-1.1\baselineskip}
\end{figure}

\subsection{Sequential Monte Carlo for \pgm{s}}
At iteration $k$, the \smc sampler approximates the target distribution $\bar{\gamma}_k$ by
a collection of weighted particles $\{\Xk^i, w_k^i\}_{i=1}^N$. These samples define an empirical
point-mass approximation of the target distribution. In what follows, we shall use the notation
 $ \xi_k \eqdef \X_{\ind_k \setminus \Ind_{k-1}}$
to refer to the collection of random variables that are in the domain of $\gamma_k$, but not in the domain of $\gamma_{k-1}$.
This corresponds to the collection of random variables, with which the particles are augmented at each iteration.

Initially, $\bar{\gamma}_1$ is approximated by importance sampling. We proceed inductively and assume that we have at hand a weighted sample
$\{\Xk[k-1]^i, w_{k-1}^i\}_{i=1}^N$, approximating $\bar{\gamma}_{k-1}(X_{\Ind_{k-1}})$.
This sample is propagated forward by simulating, conditionally independently
given the particle generation up to iteration $k-1$, and drawing an {\em ancestor index}
 $a_k^i$ with $\Prb(a_k^i = j) \propto \nu_{k-1}^j w_{k-1}^j$, $j = \range{1}{\Np}$,
where $\nu_{k-1}^i \eqdef \nu_{k-1}(\Xk[k-1]^{i})$---known as adjustment multiplier weights---are used in
the auxiliary SMC framework to adapt the resampling procedure to the current target density
$\bar{\gamma}_k$ \citep{PittS:1999}. 
Given the ancestor indices, we simulate particle increments $\{ \xi_k^i\}_{i=1}^N$ from a proposal density $\xi_k^i \sim r_k (\cdot | \Xk[k-1]^{a_k^i} )$ on $\setX_{\ind_k \setminus \Ind_{k-1}}$,
and augment the particles as $\Xk^i \eqdef \Xk[k-1]^{a_k^i} \cup \xi_k^i$.

After having performed this procedure for the $N$ ancestor indices and
particles, they are assigned importance weights $w_k^i = W_k (\Xk^i)$.
The weight function, for $k \geq 2$, is given by
\begin{align}
  \label{eq:smc:wf}
  W_k (\Xk) &= \frac{\displaystyle \gammak (\Xk)}{  \gamma_{k-1}(\Xk[k-1]) \nu_{k-1}(\Xk[k-1])  r_k (\xi_k | \Xk[k-1] )},
\end{align}
where, again, we write $\xi_k = \X_{\ind_k \setminus \Ind_{k-1}}$.
We give a summary of the SMC method in Algorithm~\ref{alg:smc}.

\begin{wrapfigure}[12]{r}{0.6\textwidth}
  \vspace*{-7.5mm}
  \begin{minipage}{0.6\textwidth}
    \begin{algorithm}[H]
      \caption{Sequential Monte Carlo (SMC)}
      \label{alg:smc}
      \begin{algorithmic}
        \STATE {\em Perform each step for $i = 1,\ldots,N$.}
        \STATE Sample $\Xk[1]^i \sim r_1(\cdot)$.
        \STATE Set $w_1^i = \gamma_1(\Xk[1]^i)/r_1(\Xk[1]^i)$.
        \FOR{$k=2$ {\bfseries to} $K$}
        \STATE Sample $a_k^{i}$ according to~$\Prb(a_k^i = j) = \frac{ \nu_{k-1}^j w_{k-1}^j }{ \sum_{l} \nu_{k-1}^l w_{k-1}^l }$.
        \STATE Sample $\xi_k^i \sim r_k (\cdot | \Xk[k-1]^{a_k^i} )$ and set $\Xk^i = \Xk[k-1]^{a_k^i} \cup \xi_k^i$.
        \STATE Set $w_k^i = W_k (\Xk^i)$.
        \ENDFOR
      \end{algorithmic}
    \end{algorithm}
  \end{minipage}
\end{wrapfigure}

In the case that $\ind_k \setminus \Ind_{k-1} = \emptyset$ for some $k$, resampling and propagation steps are
superfluous. The easiest way to handle this is to simply skip these steps and directly compute importance weights. 
An alternative approach is to bridge the two target distributions $\bar\gamma_{k-1}$ and $\bar\gamma_{k}$ 
similarly to \citet{del2006sequential}.

Since the proposed sampler for \pgm{s} falls within a general \smc framework, standard
convergence analysis applies. See \eg, 
\citet{DelMoral:2004} for a comprehensive collection
of theoretical results on consistency, central limit theorems, and non-asymptotic bounds for \smc samplers.

The choices of proposal density and adjustment multipliers can quite significantly affect the
performance of the sampler. 

It follows from \eqref{eq:smc:wf} that $W_k(\Xk) \equiv 1$ if we choose
$\nu_{k-1}(\Xk[k-1]) = \int \frac{\gammak (\Xk) }{  \gamma_{k-1}(\Xk[k-1]) } \rmd \xi_{k}$
and $r_k(\xi_k | \Xk[k-1]) = \frac{\gammak (\Xk) }{ \nu_{k-1}(\Xk[k-1]) \gamma_{k-1}(\Xk[k-1]) }$.
In this case, the \smc sampler is said to be \emph{fully adapted}.

\subsection{Estimating the partition function}\label{sec:smc:Zhat}
The partition function of a graphical model is a very
interesting quantity in many applications. Examples include
likelihood-based learning of the parameters of the \pgm, statistical
mechanics where it is related to the free energy of a system of objects, and
information theory where it is related to the capacity of a
channel.
However, as stated by \citet{hamze2005hot}, estimating the partition function of a loopy graphical model is
a ``notoriously difficult'' task. 
Indeed, even for discrete problems simple and accurate estimators have proved to be elusive,
and \mcmc methods do not provide any simple way of computing the partition function.

On the contrary, \smc provides a straightforward estimator of the normalizing constant (\ie the partition function),
given as a byproduct of the sampler according to,
\begin{equation}
  \label{eq:smc:Zhat}
  \widehat Z_k^N \eqdef \left( \frac{1}{N} \sum_{i=1}^{N} w_k^i \right) \left\{  \prod_{\ell = 1}^{k-1} \frac{1}{N} \sum_{i=1}^N \nu_\ell^i w_\ell^i  \right\}.
\end{equation}
It may not be obvious to see why \eqref{eq:smc:Zhat} is a natural estimator of the normalizing constant $Z_k$.
However, a by now well known result is that this \smc-based estimator is unbiased. This result is due to \citet[Proposition~7.4.1]{DelMoral:2004}
and, for the special case of inference in state-space models, it has also been established by
\citet{PittSGK:2012}. For completeness we also offer a proof using the present notation in the appendix.

Since $Z_K = Z$, we thus obtain an estimator of the partition function of the \pgm at iteration $K$
of the sampler. Besides from being unbiased, this estimator is also consistent and asymptotically
normal; see \citet{DelMoral:2004}.

In \citep{NaessethLS:2014IT} we
have studied a specific information theoretic application (computing the capacity of a two-dimensional channel) and
inspired by the algorithm proposed here we were able to design a sampler with significantly improved
performance compared to the previous state-of-the-art.


\newcommand\Ipgas{\Ind_K}

\section{Particle MCMC and partial blocking}\label{sec:mcmc}
Two shortcomings of \smc are: \emph{(i)} it does not solve the parameter learning problem, and
\emph{(ii)} the quality of the estimates of marginal distributions $p(\Xk) = \int \bar\gamma_K(
\Xk[K]) \rmd \X_{\Ind_K \setminus \Ind_k}$ deteriorates for $k \ll K$ due to the fact that the
particle trajectories degenerate as the particle system evolves (see \eg, \cite{DoucetJ:2011}).
Many methods have been proposed in the literature to address these problems; see \eg
\cite{lindsten2013backward} and the references therein.
Among these, the recently proposed
particle MCMC (PMCMC) framework \cite{andrieu2010particle}, plays a prominent role.
PMCMC algorithms make use of \smc to construct (in general) high-dimensional Markov kernels
that can be used within MCMC. These methods were shown by \cite{andrieu2010particle} to be exact, in the sense that
the apparent particle approximation in the construction of the kernel does
not change its invariant distribution. This property holds for any number of particles
$\Np \geq 2$, \ie, PMCMC does not rely on asymptotics in $\Np$ for correctness.

The fact that the \smc sampler for \pgm{s} presented in Algorithm~\ref{alg:smc} fits under
a general \smc umbrella implies that we can also straightforwardly make use
of this algorithm within PMCMC. This allows us to construct a Markov kernel (indexed by
the number of particles $\Np$) on the space of latent variables of the \pgm, $P_N(\X_{\Ipgas}^\prime, \rmd \X_{\Ipgas})$,
which leaves the full joint distribution $p(\xV)$ invariant.
We do not dwell on the details of the implementation here, but refer instead to \cite{andrieu2010particle}
for the general setup and \cite{LindstenJS:2014} for the specific method that we have used
in the numerical illustration in Section~\ref{sec:experiments}.

PMCMC methods enable blocking of the latent variables of the \pgm in an \mcmc scheme.
Simulating all the latent variables $\X_{\Ipgas}$ jointly is useful since, in general, this will
reduce the autocorrelation when compared to simulating the variables $x_j$ one at a time
\citep{robert1999monte}. However, it is also possible to
employ PMCMC to construct an algorithm in between these two extremes,
a strategy that we believe will be particularly useful in the context of \pgm{s}.
Let $\{ \V^m, \,m \in \set{1}{M} \}$ be a partition of $\V$. 
Ideally, a
Gibbs sampler for the joint distribution $p(\xV)$ could then be constructed by simulating,
using a systematic or a random scan, from the conditional distributions
\begin{align}
  \label{eq:blocking:gibbs}
  p(\X_{\V^m} | \X_{\V\setminus\V^m} )  \text{ for } m = \range{1}{M}.
\end{align}
We refer to this strategy as \emph{partial blocking}, since it amounts to simulating
a subset of the variables, but not necessarily all of them, jointly.
Note that, if we set $M = |\V|$ and $\V^m = \{m\}$ for $m = \range{1}{M}$, this scheme reduces
to a standard 
Gibbs sampler. On the other extreme, with $M = 1$ and $\V^1 = \V$, we get a fully blocked sampler which targets directly
the full joint distribution $p(\xV)$.%

From \eqref{eq:pfactors} it follows that the conditional distributions \eqref{eq:blocking:gibbs} can be expressed as
\begin{equation}
  \label{blocking:conditional}
  p(\X_{\V^m} | \X_{\V\setminus\V^m} ) \propto \prod_{C \in \C^m} \psi_C (\X_C),
\end{equation}
where $\C^m = \{ C \in \C : C \cap \V^m \neq \emptyset \}$.
While it is in general not possible to sample exactly from these conditionals,
we can make use of PMCMC to facilitate a partially blocked Gibbs sampler for a \pgm.
By letting $p( \X_{\V^m} | \X_{\V \setminus \V^m} )$ be the target distribution for the \smc sampler of
Algorithm~\ref{alg:smc}, we can construct a PMCMC kernel $P_N^m$ that leaves the conditional
distribution \eqref{blocking:conditional} invariant. This suggests the following approach:
with $\xV^\prime$ being the current state of the Markov chain, update block $m$ by sampling
\begin{align}
  \X_{\V^m} \sim P_N^m\langle \X_{\V \setminus \V^m}^\prime \rangle(\X_{\V^m}^\prime, \cdot).
\end{align}
Here we have indicated explicitly in the notation that the PMCMC kernel for the conditional
distribution $p( \X_{\V^m} | \X_{\V \setminus \V^m} )$ depends on both
$\X_{\V \setminus \V^m}^\prime$ (which is considered to be fixed throughout the sampling procedure)
and on $\X_{\V^m}^\prime$ (which defines the current state of the PMCMC procedure).

As mentioned above, while being generally applicable, we believe that partial blocking of PMCMC
samplers will be particularly useful for \pgm{s}.
The reason is that we can choose the vertex sets
$\V^m$ for $m = \range{1}{M}$
in order to facilitate simple sequential decompositions of the induced subgraphs.
For instance, it is always possible to choose the partition
in such a way
that all the induced subgraphs are chains.


\section{Experiments}\label{sec:experiments}
In this section we evaluate the proposed \smc sampler on three examples to illustrate the
merits of our approach. Additional details and results are available in the appendix and code to reproduce results can be found in \citep{naessethlsCODE}. We first consider an example from statistical mechanics, the classical XY model, to illustrate the impact of the sequential decomposition. Furthermore, we profile our algorithm with the ``gold standard'' \ais \citep{neal2001annealed} and Annealed Sequential Importance Resampling (\asir\footnote{\asir is a specific instance of the \emph{\smc sampler} by \citep{del2006sequential}, corresponding to \ais with the addition of resampling steps, but to avoid confusion with the proposed method we choose to refer to it as \asir.}) \citep{del2006sequential}.
In the second example we apply the proposed method to the problem of scoring of topic models,
and finally we consider a simple toy model, a Gaussian Markov random field (MRF), which illustrates
that our proposed method has the potential to significantly decrease correlations between samples in
an \mcmc scheme. Furthermore, we provide an \emph{exact} \smc-approximation of the tree-sampler by \citet{hamze2004fields} and thereby extend the scope of this powerful method.%

\subsection{Classical XY model}
\label{sec:xy}
\begin{wrapfigure}{r}{0.5\textwidth}
\begin{center}
\vspace{-30pt}
\includegraphics[width=0.48\textwidth]{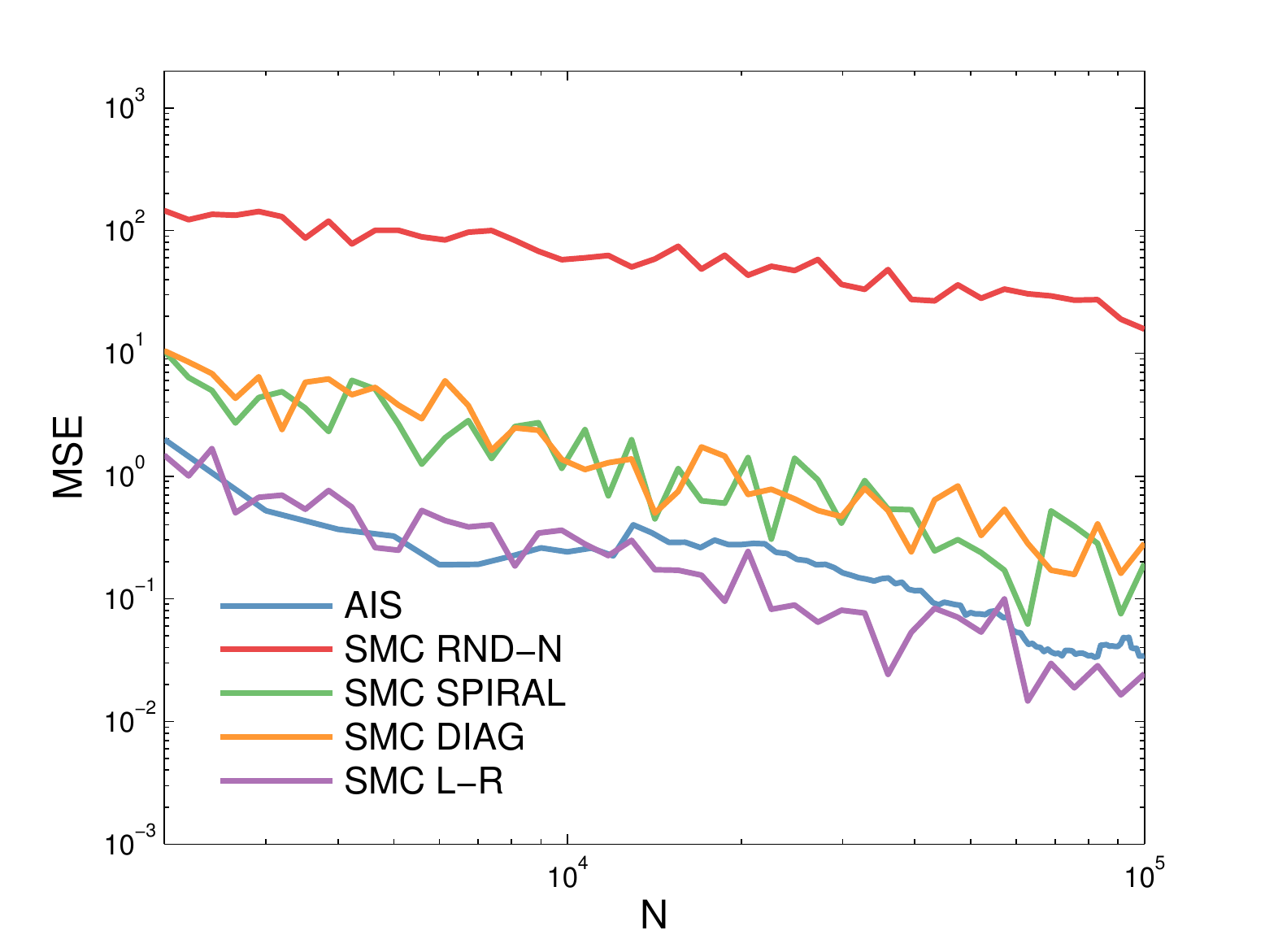}
\vspace{-10pt}
\end{center}
\caption{Mean-squared-errors for sample size $N$ in the estimates of $\log Z$ for \ais and four different orderings in the proposed \smc framework.}
\label{fig:16b11}
\end{wrapfigure}
The classical XY model (see \eg \citep{kosterlitz1973ordering})
is a member in the family of {\em n-vector}
models used in statistical mechanics. It can be seen as a generalization of
the well known Ising model with a two-dimensional electromagnetic
spin. The spin vector
is described by its angle $x \in (-\pi,\pi]$. We will consider square lattices with periodic boundary conditions. The joint \pdf of the classical XY model with equal interaction is given by
\begin{equation}
p(\xV) \propto \mathrm{e}^{\beta \sum_{(i,j) \in \E} \cos (x_i - x_j)},
\end{equation}
where $\beta$ denotes the  inverse temperature.

To evaluate the effect of different sequence orders on the accuracy of the estimates of the
log-normalizing-constant $\log Z$ we ran several experiments on a $16 \times 16$ XY model with
$\beta = 1.1$ (approximately the critical inverse temperature \citep{TomitaO:2002}). For simplicity
we add one node at a time and all factors bridging this node with previously added nodes. Full
adaptation in this case is possible due to the optimal proposal being a von Mises distribution. We
show results for the following cases: \emph{Random neighbour (RND-N)} First node selected randomly
among all nodes, concurrent nodes selected randomly from the set of nodes with a neighbour in
$\Xk[k-1]$. \emph{Diagonal (DIAG)} Nodes added by traversing diagonally ($45^{\circ}$ angle) from
left to right. \emph{Spiral (SPIRAL)} Nodes added spiralling in towards the middle from the
edges. \emph{Left-Right (L-R)} Nodes added by traversing the graph left to right, from top to
bottom.

We also give results of \ais with single-site-Gibbs updates and $1\thinspace000$ annealing distributions linearly spaced from zero to one,
starting from a uniform distribution
(geometric spacing did not yield any improvement over linear spacing for this case). The ``true value'' was estimated using \ais with $10\thinspace000$ intermediate distributions and $5\thinspace000$ importance samples. We can see from the results in Figure~\ref{fig:16b11} that designing a good sequential decomposition for the \smc sampler is important. However, the intuitive and fairly simple choice L-R does give very good results comparable to that of \ais.

Furthermore, we consider a larger size of $64 \times 64$ and evaluate the performance of the L-R ordering compared to \ais and the \asir method. Figure \ref{fig:xy64} displays box-plots of $10$ independent runs. We set $N=10^5$ for
the proposed \smc sampler and then match the computational costs of \ais and \asir with this computational budget. A fair amount of time was spent in tuning the \ais and \asir algorithms; $10\thinspace000$ linear annealing distributions seemed to give best performance in these cases.
\begin{figure}[tb]
  \centering
  \begin{subfigure}{0.31\textwidth}
  \centering
    \includegraphics[width=1.1\textwidth]{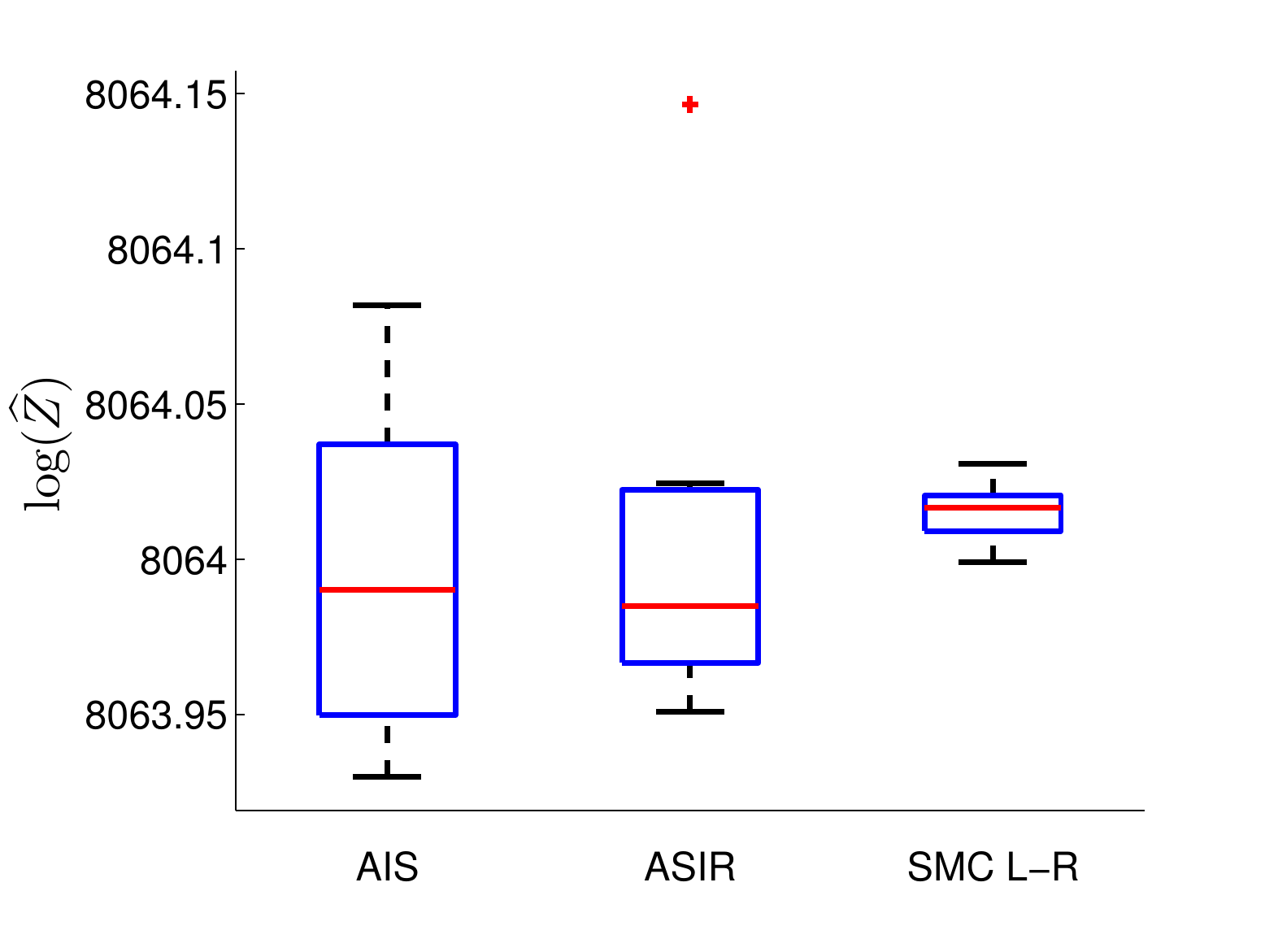}
  \end{subfigure}
  \begin{subfigure}{0.31\textwidth}
  \centering
    \includegraphics[width=1.1\textwidth]{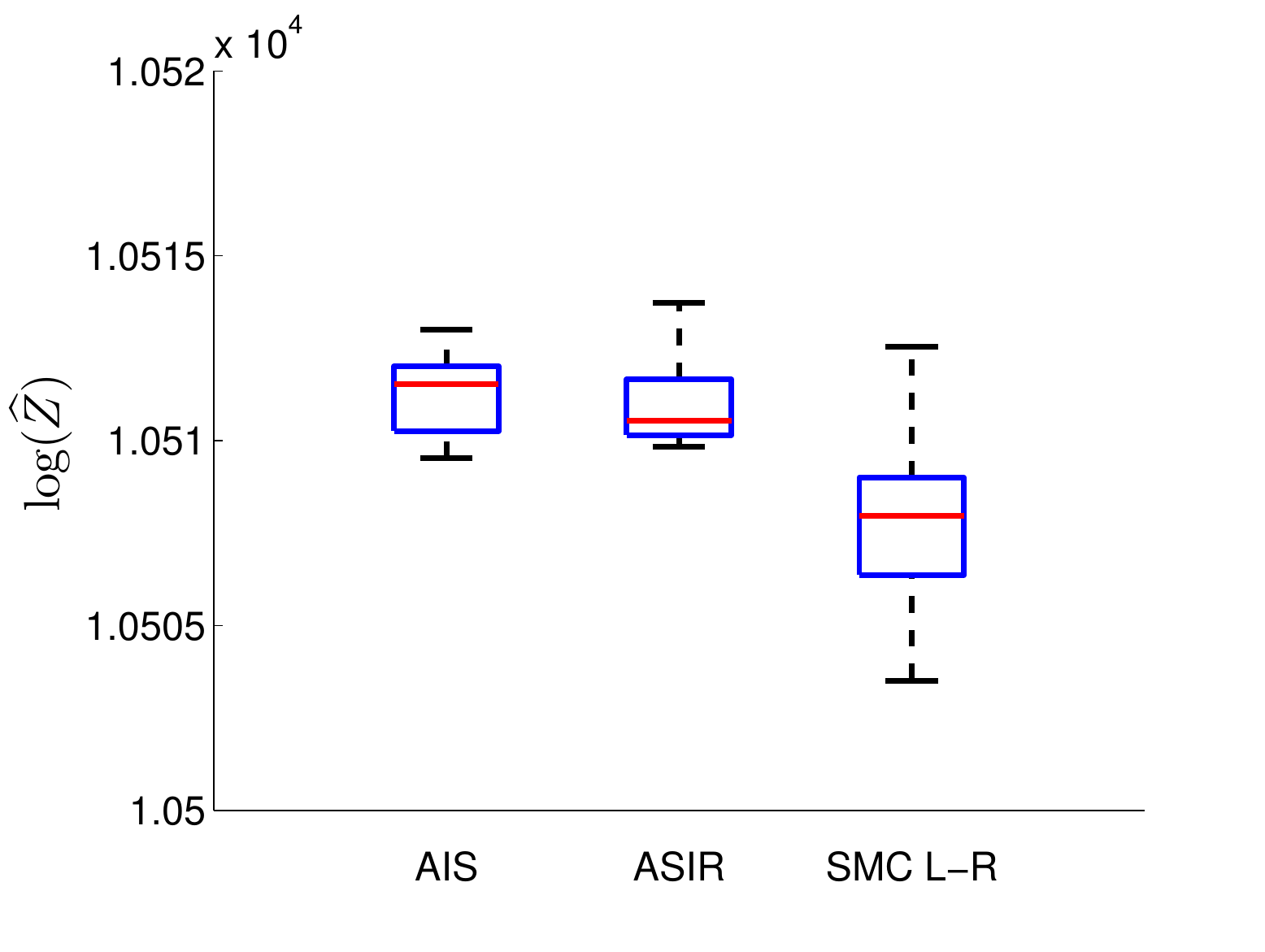}
  \end{subfigure}
  \begin{subfigure}{0.31\textwidth}
  \centering
    \includegraphics[width=1.1\textwidth]{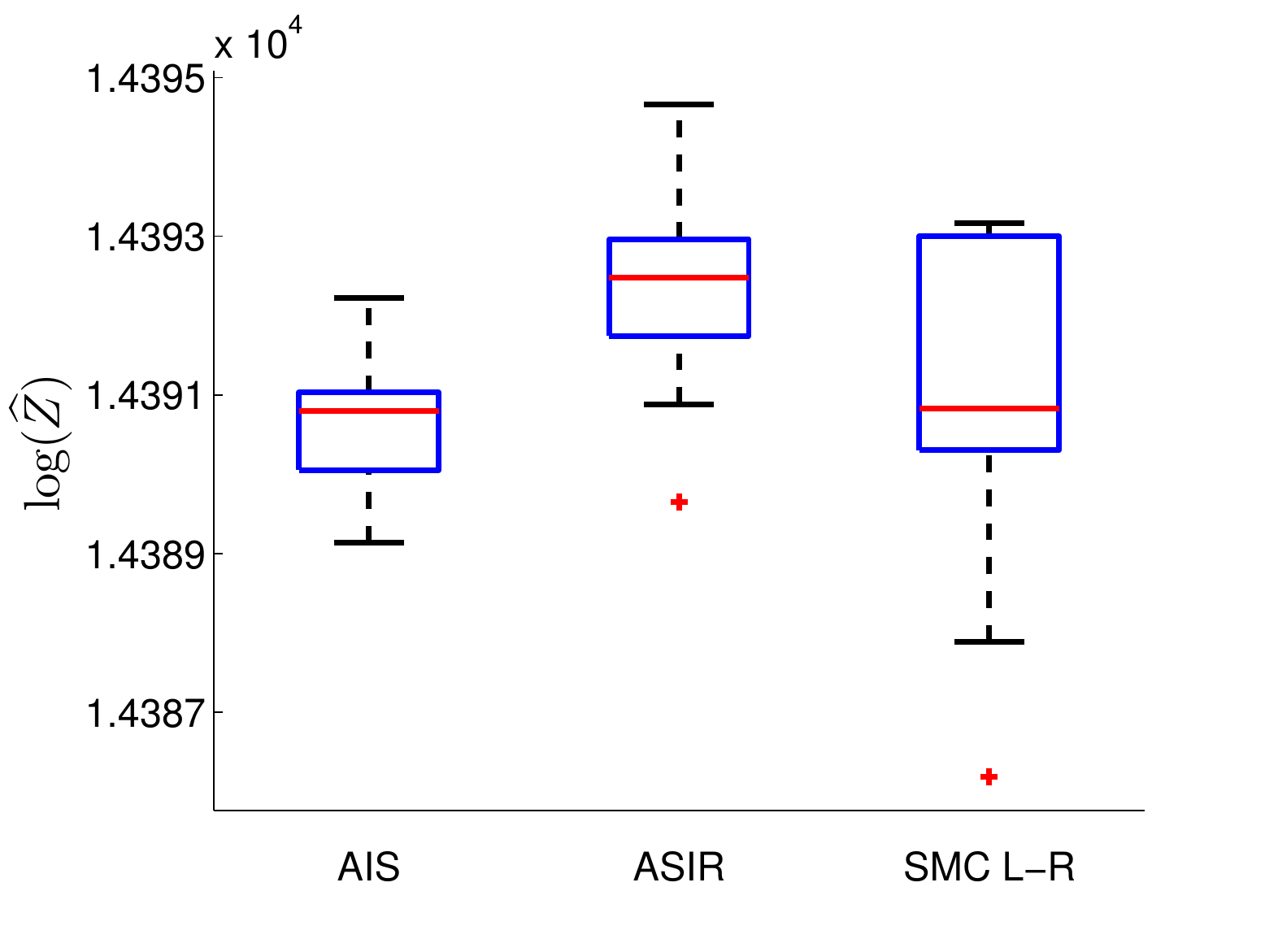}
  \end{subfigure}
  \caption{The logarithm of the estimated partition function for the $64 \times 64$ XY model with inverse temperature $0.5$ (left), $1.1$ (middle) and $1.7$ (right).}
  \label{fig:xy64}
\end{figure}
We can see that the L-R ordering gives results comparable to fairly well-tuned \ais and \asir algorithms;
the ordering of the methods depending on the temperature of the model. One option that does make the \smc algorithm interesting for these types of applications is that it can easily be parallelized over the particles, whereas \ais/\asir has limited possibilities of parallel implementation over the (crucial) annealing steps.

\subsection{Likelihood estimation in topic models}
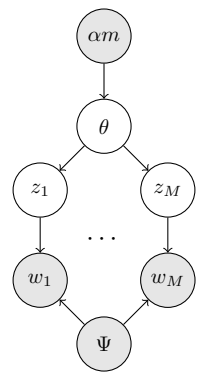
\begin{wrapfigure}{r}{0.25\textwidth}
\centering
\vspace{-10pt}
\begin{tikzpicture}[->,node distance=1.5cm,main node/.style={circle,draw,minimum width=0.9cm,scale=0.8},param node/.style={circle,draw,minimum width=0.9cm,scale=0.8,fill=gray!20}]
  \node[param node] (1) {$\alpha m$};
  \node[main node] (2) [below of=1] {$\theta$};
  \node[main node] (3) [below left of=2] {$z_1$};
  \node[param node] (4) [below of=3] {$w_1$};
  \node (5) [below of=2] {$\cdots$};
  \node[main node] (6) [below right of=2] {$z_M$};
  \node[param node] (7) [below of=6] {$w_M$};
  \node[param node] (8) [below left of=7] {$\Psi$};

  \path (1) edge (2);
  \path (2) edge (3);
  \path (3) edge (4);
  \path (2) edge (6);
  \path (6) edge (7);
  \path (8) edge (4);
  \path (8) edge (7);
\end{tikzpicture}
\caption{\lda as graphical model.}
\vspace{-10pt}
\label{fig:ldaPGM}
\end{wrapfigure}
Topic models such as Latent Dirichlet Allocation (\lda) \citep{blei2003latent}
are popular models for reasoning about large text corpora.
Model evaluation is often conducted by computing the likelihood of held-out documents
w.r.t.\ a learnt model. However, this is a challenging problem on its own---which has received much recent interest \citep{wallach2009evaluation, buntine2009estimating, scott2013arecursive}---since it essentially corresponds to computing the partition function of a graphical model; see Figure~\ref{fig:ldaPGM}.
The \smc procedure of Algorithm~\ref{alg:smc} can used to solve this problem by defining a sequential decomposition of the graphical model.
In particular, we consider the decomposition corresponding to first including the node $\theta$ and then,
 subsequently, introducing
the nodes $z_{1}$ to $z_M$ in any 
order. Interestingly, if we then make use of a Rao-Blackwellization
over the variable $\theta$, the \smc sampler of Algorithm~\ref{alg:smc} reduces exactly to
a method that has previously been proposed for this specific problem \citep{scott2013arecursive}.
In \citep{scott2013arecursive}, the method is derived by reformulating the model in terms of its sufficient
statistics and phrasing this as a particle learning problem; here we obtain the same
procedure as a special case of the general \smc algorithm operating on the original model.

We use the same data and learnt models as \citet{wallach2009evaluation}, \ie 20 newsgroups, and PubMed Central abstracts (PMC). We compare with the Left-Right-Sequential (\lrs) sampler \citep{buntine2009estimating},
which is an improvement over the method proposed by \citet{wallach2009evaluation}.
Results on simulated and real data experiments are provided in Figure~\ref{fig:lda}. For the simulated example (Figure~\ref{fig:lda:sim}), we use a small model with 10 words and 4 topics to be able to compute the exact log-likelihood.
We keep the number of particles in the \smc algorithm equal to the number of Gibbs steps in \lrs; this means \lrs is about an order-of-magnitude more computationally demanding than the \smc method. 
Despite the fact that the \smc sampler uses only about a tenth of the computational time of the \lrs
sampler, it performs significantly better in terms of estimator variance.
\begin{figure}[tb]
  \centering
  \begin{subfigure}{0.32\textwidth}
  \centering
    \includegraphics[width=1.0\textwidth]{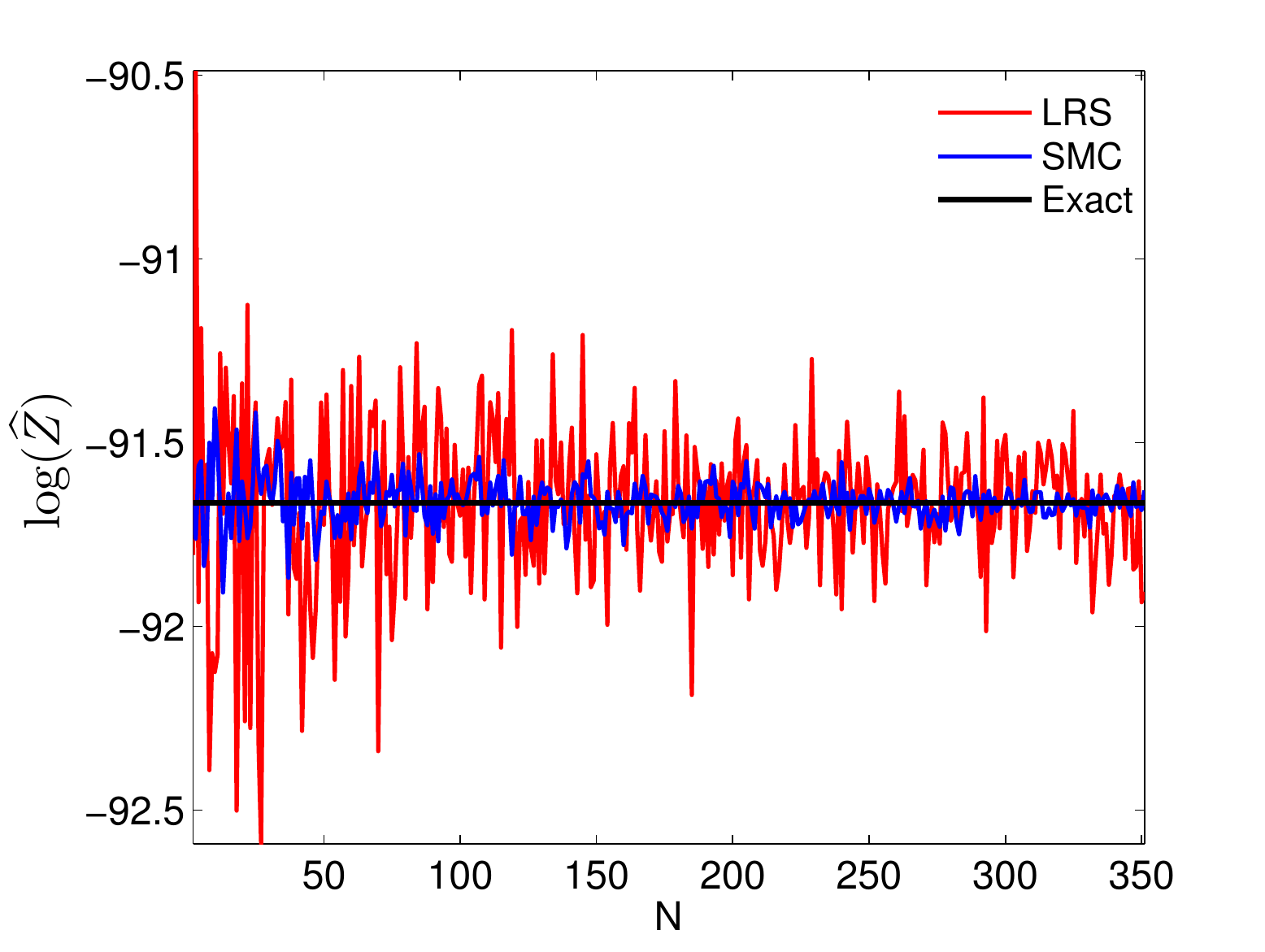}
    \caption{Small simulated example.}\label{fig:lda:sim}
  \end{subfigure}
  \begin{subfigure}{0.32\textwidth}
  \centering
    \includegraphics[width=1.0\textwidth]{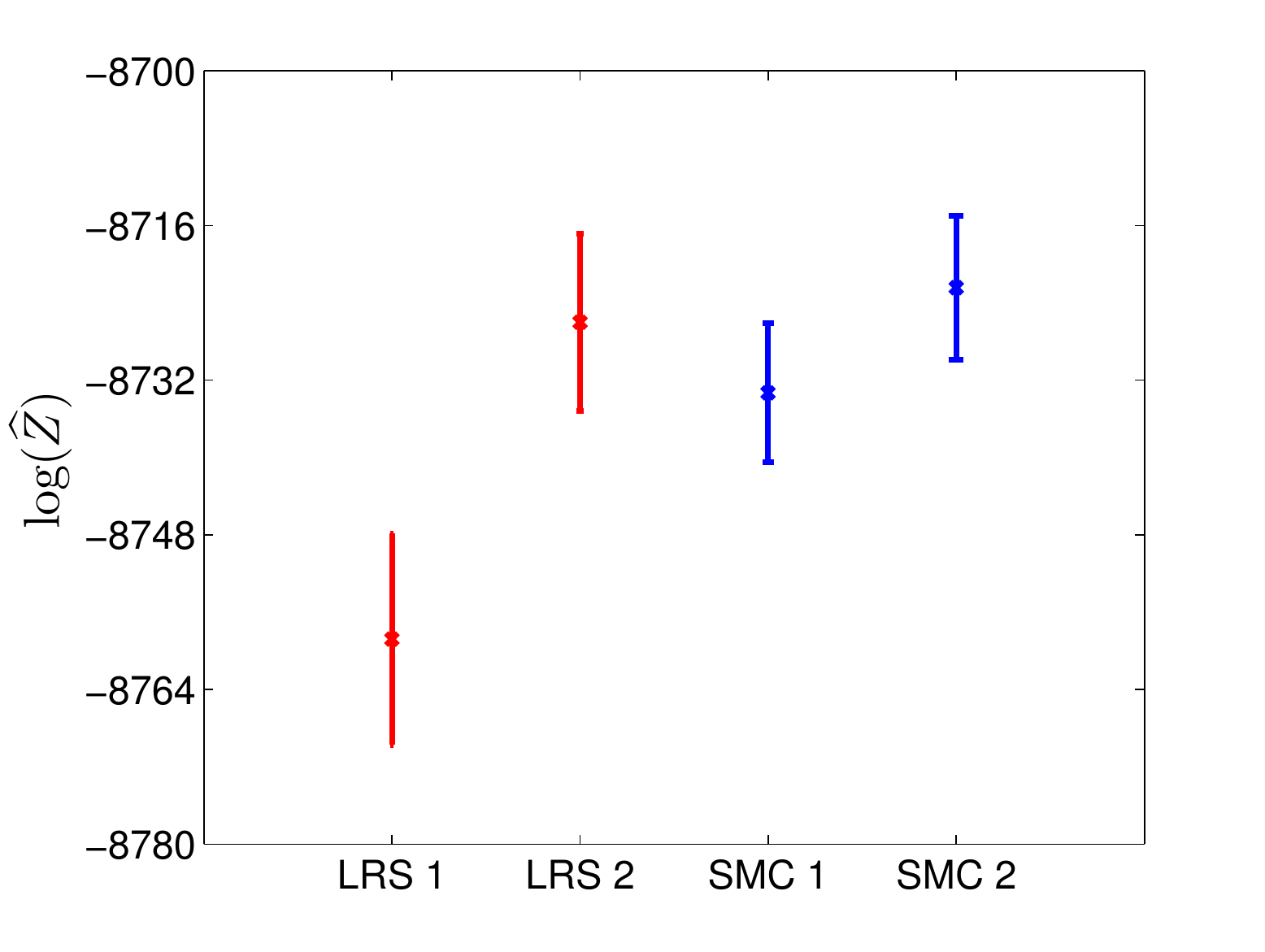}
    \caption{PMC.}
  \end{subfigure}
  \begin{subfigure}{0.32\textwidth}
  \centering
    \includegraphics[width=1.0\textwidth]{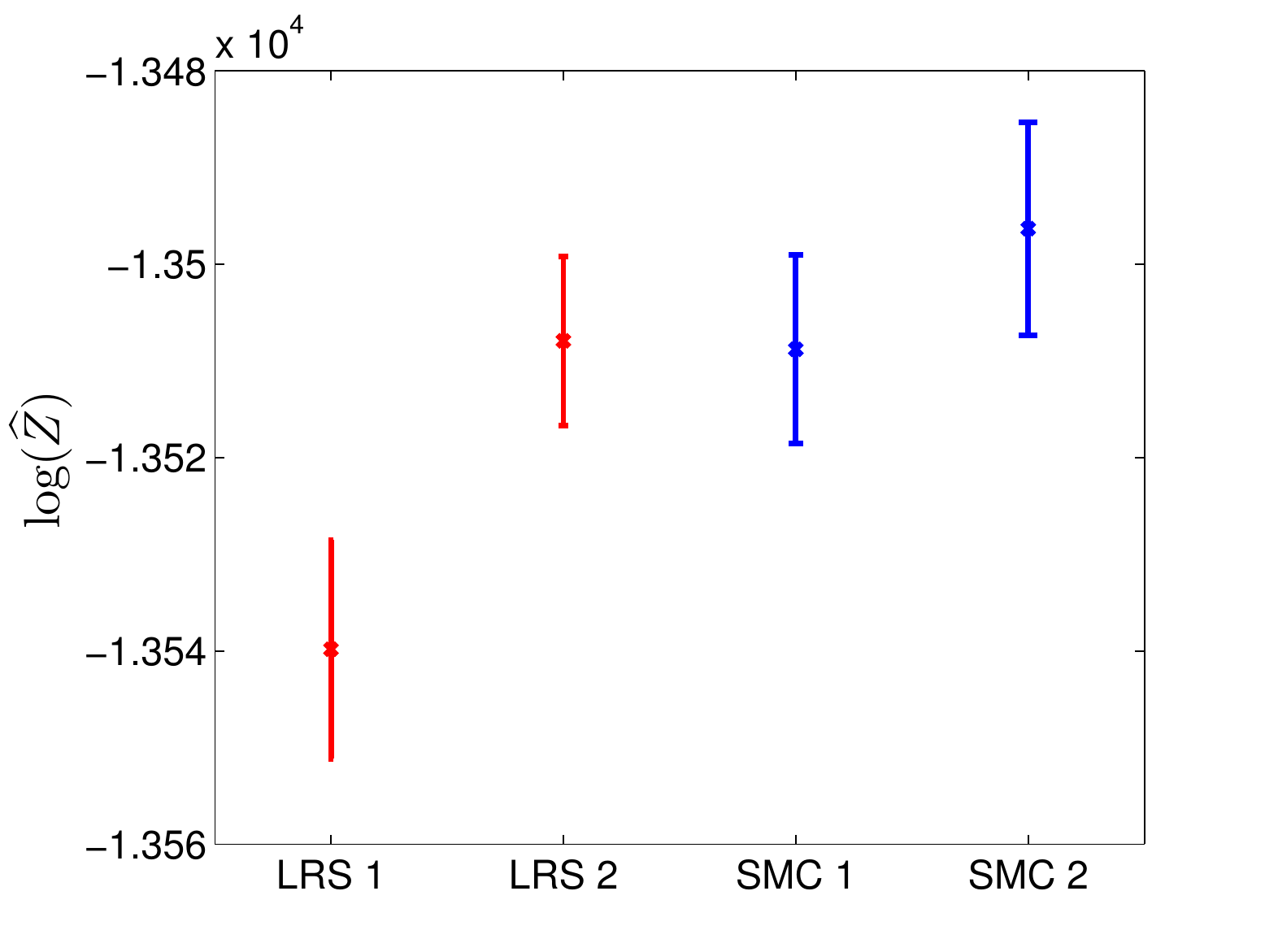}
    \caption{20 newsgroups.}
  \end{subfigure}
  \caption{Estimates of the log-likelihood of heldout documents for various datasets.}
  \label{fig:lda}
\end{figure}
The other two plots show results on real data with $10$ held-out documents for each dataset. For a fixed number of Gibbs steps
we choose the number of particles for each document to make the computational cost approximately equal. 
Run \#2 has twice the number of particles/samples as in run \#1. We show the mean of $10$ runs and
error-bars estimated using bootstrapping with $10\thinspace000$ samples. Computing the logarithm of $\hat Z$ introduces a negative bias, which means larger values of $\log \hat Z$ typically implies more accurate results. The results on real data do
not show the drastic improvement we see in the simulated example, which could be due to degeneracy problems for long documents. An interesting approach that could improve results would be to use an \smc algorithm tailored to discrete distributions, \eg \citet{fearnhead2003online}.

\subsection{Gaussian MRF}
Finally, we consider a simple toy model to illustrate how the \smc sampler of Algorithm~\ref{alg:smc}
can be incorporated in PMCMC sampling. 
We simulate data from a zero mean Gaussian  $10 \times 10$ lattice \mrf with observation and interaction standard deviations of $\sigma_i=1$ and $\sigma_{ij} = 0.1$ respectively. We use the proposed \smc algorithm together with the \pmcmc method by \citet{LindstenJS:2014}.
We compare this with standard Gibbs sampling and the tree sampler by \citet{hamze2004fields}.

\begin{wrapfigure}{r}{0.4\textwidth}
  \vspace{-12pt}
     \includegraphics[width=0.4\textwidth]{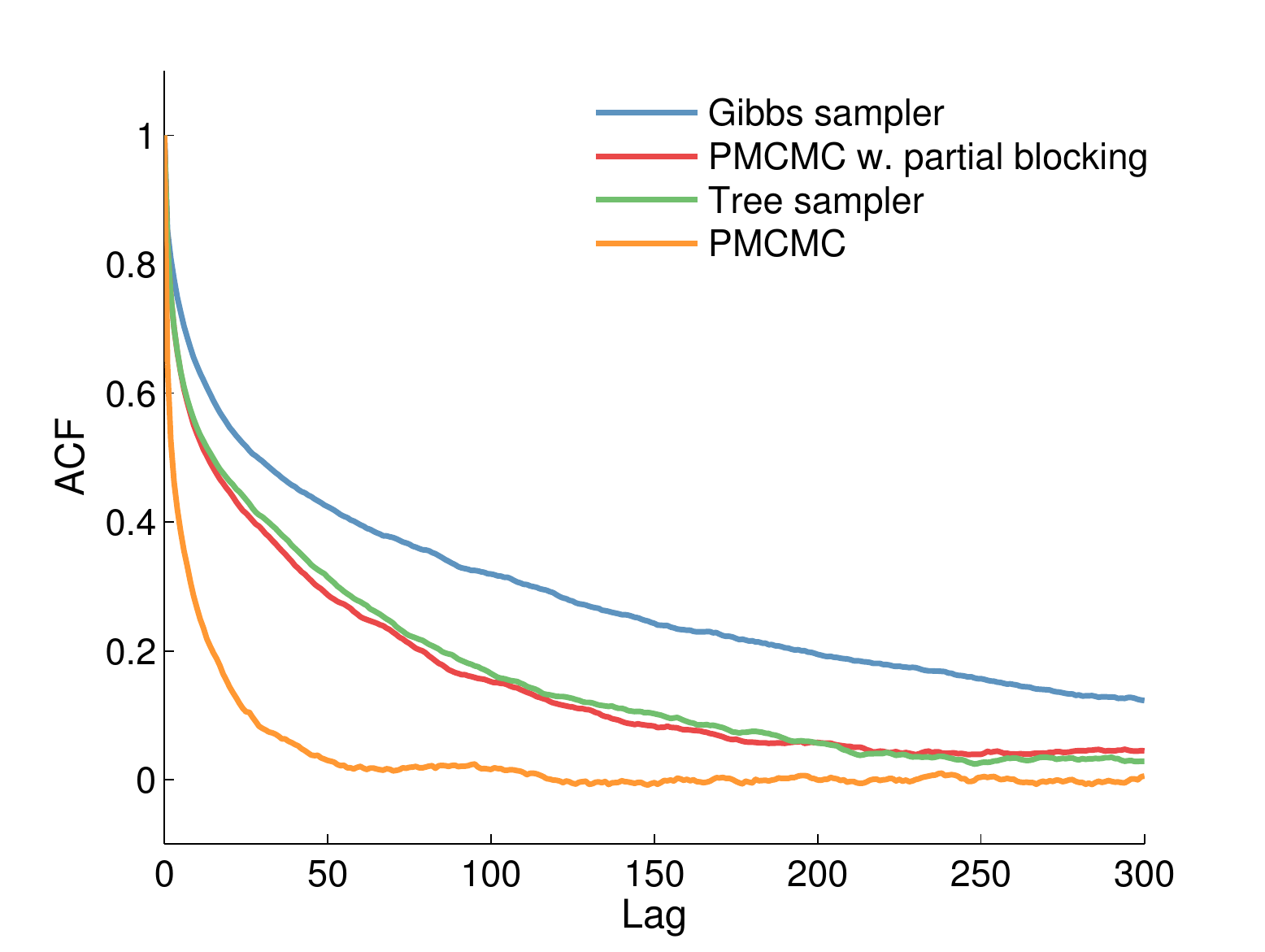}
     \caption{The empirical ACF for Gibbs sampling, PMCMC, PMCMC with partial blocking, and tree sampling.}\label{fig:gmrfacf}
\end{wrapfigure}
We use a moderate number of $\Np = 50$ particles in the PMCMC sampler (recall that it admits the correct invariant
distribution for any $\Np \geq 2$).
In Figure~\ref{fig:gmrfacf} we can see the empirical autocorrelation funtions (ACF) centered around the true posterior mean for variable $x_{82}$ (selected randomly from among $\X_\V$; similar results hold for all the variables of the model).  Due to the strong interaction between the
latent variables, the samples generated by the standard Gibbs sampler are strongly correlated.
Tree-sampling and PMCMC with partial blocking show nearly identical gains compared to Gibbs. This is interesting, since it suggest that simulating from the \smc-based PMCMC kernel can be almost as efficient as exact simulation, even using a moderate number of particles. Indeed, PMCMC with partial blocking can be viewed as an \emph{exact} \smc-approximation
of the tree sampler, extending the scope of tree-sampling beyond discrete and Gaussian models.
The fully blocked PMCMC algorithm achieves the best ACF, dropping off to zero considerably faster than for the other methods. This is not surprising since this sampler simulates all the latent variables jointly which reduces the autocorrelation, in particular when the latent variables are strongly dependent. However, it should be noted
that this method also has the highest computational cost per iteration.


\section{Conclusion}\label{sec:discussion}
We have proposed a new framework for inference in \pgm{s} using \smc and illustrated it on three examples. These examples show that it can be a viable alternative to standard methods used for inference and partition function estimation problems. An interesting avenue for future work is combining our proposed methods with \ais, to see if we can improve on both.

\subsubsection*{Acknowledgments}
We would like to thank Iain Murray for his kind and very prompt help in providing the data for the
LDA example. This work was supported by the projects: \emph{Learning of complex dynamical systems}
(Contract number: 637-2014-466) and \emph{Probabilistic modeling of dynamical systems} (Contract
number: 621-2013-5524), both funded by the Swedish Research Council.

\appendix
\section{Supplementary material}
  This appendix contains additional information on the experiments in the main
  paper \citep{NaessethLS:2014} as well as a simple and direct proof of the unbiasedness of the
  partition function estimator $\widehat Z^N_k$, stated in the main manuscript.
  It should be noted, however, that this result is not new. It has previously been established in a general setting
  by \citet[Proposition~7.4.1]{DelMoral:2004} and, additionally, by \citet{PittSGK:2012} who provide a more accessible proof
  for the special case of state-space models.
  Our proof is similar to that of \citet{PittSGK:2012}, but generalized to the \pgm setting that we consider.

\subsection{Experiments}
\subsubsection{Classical XY model}
The classical XY model, see \eg \citep{kosterlitz1973ordering} and references therein, is a member in the family of {\em n-vector}
models used in statistical mechanics. It can be seen as a generalization of
the well known Ising model with a two-dimensional electromagnetic
spin. The spin vector
is described by its angle $x \in (-\pi,\pi]$. We will consider a square lattice with periodic
boundary conditions, \ie the first and last row/columns are connected. The individual sites are described by their spin angle.

The full joint \pdf of the classical XY model is given by
\begin{equation}
p(\xV) \propto e^{-\beta H(\xV)},
\end{equation}
where $\beta$ is the inverse temperature and $H(\xV)$---the
Hamiltonian---is a sum of pair-wise interaction described by
\begin{equation}
  H(\xV) = -\sum_{(i,j)\in \mathcal{E}} J_{ij} \cos{(x_i - x_j)},
\end{equation}
where the $J_{ij}$'s are parameters describing interactions between
the different sites. For simplicity we set $J_{ij} = J = 1$ and estimate the partition function for several sizes and $\beta$. 

In our sequence of target distributions we add one variable at a time and all associated factors. A simple example where we alternate left-right, right-left, can be seen in Figure~\ref{fig:blockorder}. To be specific we choose our sequence of intermediate target distributions as
\begin{align}
\gamma_k (\Xk) &\propto \gamma_{k-1}(\Xk[k-1]) \prod_{i \in \mathcal{N}_k} e^{\beta J_{ki} \cos{(x_k -x_i)}} \nonumber \\
&\propto \gamma_{k-1}(\Xk[k-1]) e^{\kappa(\Xk[k-1]) \cos{(x_k -\mu(\Xk[k-1]))}},
\end{align}
where $\mathcal{N}_k = \{i:(k,i) \in \E\}  \cap \Ind_{k-1}$ denotes the set of neighbours to variable $k$ in $\Ind_{k-1}$. The quantities $\mu(\Xk[k-1])$ and $\kappa(\Xk[k-1])$ follow from elementary trigonometric operations (sum of cosines). From the above expression we note that, conditionally on $\Xk[k-1]$, the variable $x_k$ is von Mises distributed under $\bar\gamma_k$,
with $\Xk[k-1]$-dependent mean $\mu$ and dispersion $\kappa$. This implies that we can employ full adaption of the proposed \smc sampler.
This is accomplished by choosing the aforementioned von Mises distribution as proposal distribution
$r_k (x_k | \Xk[k-1])$ and by choosing the corresponding normalizing constants $\nu(\Xk[k-1]) = 2
\pi I_0 (\kappa(\Xk[k-1]))$ (where $I_0$ is the modified Bessel function of order $0$) as adjustment
weights. 
We use the fully adapted \smc sampler to estimate the partition function of the classical XY model.

We consider four different orderings of the nodes:
\begin{description}
\item[Random neighbour] The first node is selected randomly among all nodes, concurrent nodes are
  then selected randomly from the set of nodes with a neighbour in $\Xk[k-1]$.

\item[Diagonal] The nodes are added by traversing from left to right with $45^{\circ}$, see Figure~\ref{fig:diag}. 

\item[Spiral] The nodes are added spiralling in towards the middle from the edges, see Figure~\ref{fig:spiral}.

\item[Left-Right] The nodes are added by traversing the graph from left to right, from top to bottom, see Figure~\ref{fig:lr}.
\end{description}
See illustrations of the node orderings displayed in Figure~\ref{fig:xy:order} for a $3 \times 3$ example, numbers display at what iteration the node is added.
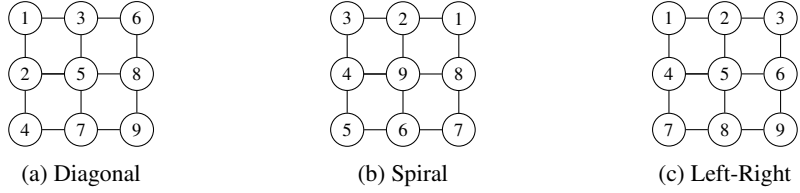
\begin{figure}
\centering
\begin{subfigure}[b]{0.3\textwidth}
\centering
\begin{tikzpicture}[node distance=1.06cm,main node/.style={circle,draw, scale=0.7}]
  \node[main node] (1) {1};
  \node[main node] (2) [right of=1] {3};
  \node[main node] (3) [right of=2] {6};
  \node[main node] (4) [below of=1] {2};
  \node[main node] (5) [right of=4] {5};
  \node[main node] (6) [right of=5] {8};
  \node[main node] (7) [below of=4] {4};
  \node[main node] (8) [right of=7] {7};
  \node[main node] (9) [right of=8] {9};

  \path (1) edge (2);
  \path (2) edge (3);
  \path (1) edge (4);
  \path (2) edge (5);
  \path (3) edge (6);
  \path (4) edge (5);
  \path (5) edge (6);
  \path (4) edge (5);
  \path (4) edge (7);
  \path (5) edge (8);
  \path (6) edge (9);
  \path (7) edge (8);
  \path (8) edge (9);
\end{tikzpicture}
\caption{Diagonal}
\label{fig:diag}
\end{subfigure}
\begin{subfigure}[b]{0.3\textwidth}
\centering
\begin{tikzpicture}[node distance=1.06cm,main node/.style={circle,draw, scale=0.7}]
  \node[main node] (1) {3};
  \node[main node] (2) [right of=1] {2};
  \node[main node] (3) [right of=2] {1};
  \node[main node] (4) [below of=1] {4};
  \node[main node] (5) [right of=4] {9};
  \node[main node] (6) [right of=5] {8};
  \node[main node] (7) [below of=4] {5};
  \node[main node] (8) [right of=7] {6};
  \node[main node] (9) [right of=8] {7};

  \path (1) edge (2);
  \path (2) edge (3);
  \path (1) edge (4);
  \path (2) edge (5);
  \path (3) edge (6);
  \path (4) edge (5);
  \path (5) edge (6);
  \path (4) edge (5);
  \path (4) edge (7);
  \path (5) edge (8);
  \path (6) edge (9);
  \path (7) edge (8);
  \path (8) edge (9);
\end{tikzpicture}
\caption{Spiral}
\label{fig:spiral}
\end{subfigure}
\begin{subfigure}[b]{0.3\textwidth}
\centering
\begin{tikzpicture}[node distance=1.06cm,main node/.style={circle,draw, scale=0.7}]
  \node[main node] (1) {1};
  \node[main node] (2) [right of=1] {2};
  \node[main node] (3) [right of=2] {3};
  \node[main node] (4) [below of=1] {4};
  \node[main node] (5) [right of=4] {5};
  \node[main node] (6) [right of=5] {6};
  \node[main node] (7) [below of=4] {7};
  \node[main node] (8) [right of=7] {8};
  \node[main node] (9) [right of=8] {9};

  \path (1) edge (2);
  \path (2) edge (3);
  \path (1) edge (4);
  \path (2) edge (5);
  \path (3) edge (6);
  \path (4) edge (5);
  \path (5) edge (6);
  \path (4) edge (5);
  \path (4) edge (7);
  \path (5) edge (8);
  \path (6) edge (9);
  \path (7) edge (8);
  \path (8) edge (9);
\end{tikzpicture}
\caption{Left-Right}
\label{fig:lr}
\end{subfigure}
\caption{Illustration of some of the different orderings considered in the XY model.}\label{fig:xy:order}
\end{figure}

\subsubsection{Evaluation of topic models}
Here we present some additional results (Figure~\ref{fig:ldaextra}) on the synthetic example for
various settings of the number of topics ($T$) and words ($W$). LRS $1$ and LRS $2$ has $10$ and
$20$ samples, respectively. The number of particles where set to give comparable computational
complexity.
\begin{figure}[h!]
  \centering
  \begin{subfigure}{0.4\textwidth}
    \includegraphics[width=\textwidth]{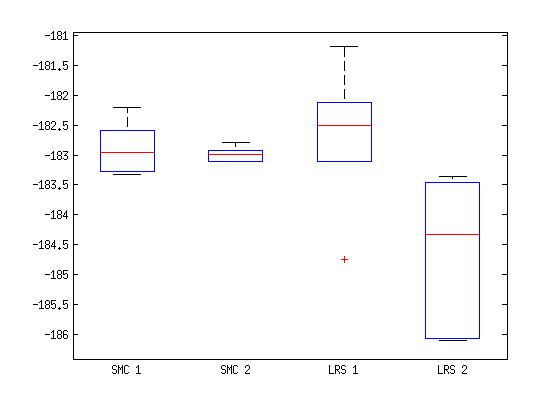}
    \caption{$T=100,W=20$.}\label{fig:lda1}
  \end{subfigure}%
  ~
  \begin{subfigure}{0.4\textwidth}
    \includegraphics[width=\textwidth]{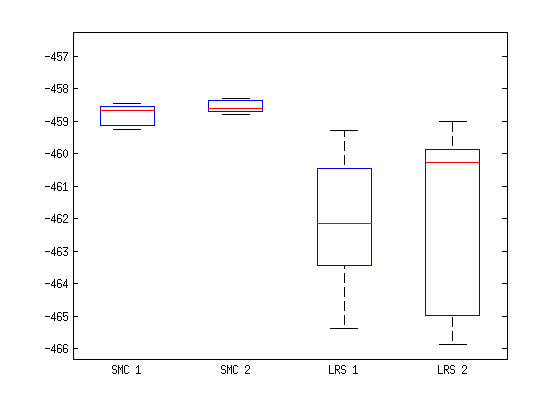}
    \caption{$T=100,W=50$.}\label{fig:lda2}
  \end{subfigure}
  
  \begin{subfigure}{0.4\textwidth}
    \includegraphics[width=\textwidth]{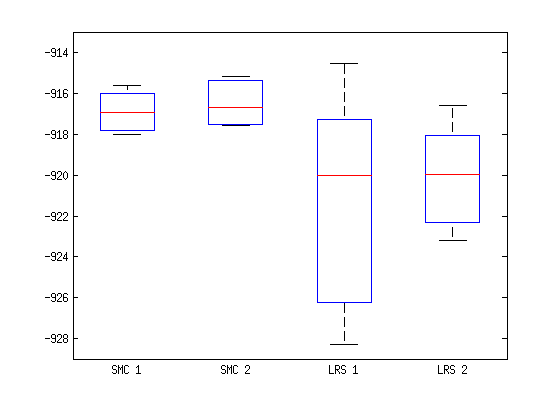}
    \caption{$T=100,W=100$.}\label{fig:lda3}
  \end{subfigure}%
  ~
  \begin{subfigure}{0.4\textwidth}
    \includegraphics[width=\textwidth]{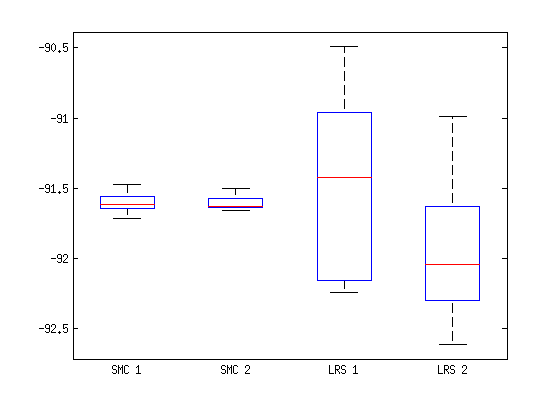}
    \caption{$T=200,W=10$.}\label{fig:lda4}
  \end{subfigure}
  \caption{Estimates of the log-likelihood of a synthetic LDA model.}\label{fig:ldaextra}
\end{figure}

\subsubsection{Gaussian Markov random field}
Consider a square lattice Gaussian Markov random field (MRF) of size $10 \times 10$, given by the relation
\begin{equation}
p(\X_\V, \Y_\V) \propto \prod_{i \in \V} e^{\frac{1}{2\sigma_i^2} (x_i
  - y_i)^2} \prod_{(i,j)
  \in \E} e^{\frac{1}{2\sigma_{ij}^2} (x_i - x_j)^2},
\end{equation}
with latent variables $\X_\V = \set{x_1}{x_{100}}$ and measurements $\Y_\V = \set{y_1}{y_{100}}$.
The graphical representation of the latent variables in this model is shown in
Figure~\ref{fig:blockorder}.

The measurements $\Y_\V$ where simulated from the model with $\sigma_i = 1$ and $\sigma_{ij} = 0.1$.
Given these measurement, we seek the posterior distribution $p(\X_\V \mid \Y_\V)$.  We run four
different \mcmc samplers to simulate from this distribution; the proposed (fully blocked) \pgas, the
proposed \pgas with partial blocking, a standard one-at-a-time Gibbs sampler, and the tree-sampler
proposed by \citet{hamze2004fields}. For the \pgas algorithms we use $N=50$ particles.  The
tree-sampler exploits the fact that the model is Gaussian and it can thus not be used for arbitrary
(non-Gaussian or non-discrete) graphs.  By partitioning the graph into disjoint trees (in our case,
chains) for which exact inference is possible, the tree-sampler implements an ``ideal'' partially
blocked Gibbs sampler.  \pgas with partial blocking can thus be seen as an \smc-based version of the
tree-sampler.  See Figure~\ref{fig:blockorder} for the ordering in the \pgas algorithm and the
blocking used for tree-sampling and \pgas with partial blocking, corresponding to a partition of the
graph into two chains. The variables are numbered $1,\ldots,100$ from top to bottom, left to right and $\Ind_k$ is taken as the $k$ first indices of $\Ind_K = \{1,\ldots,10, 20, 19, \ldots, 11, 21, 22, \ldots, 100, 99, \ldots, 91\}$. This ordering gives results very similar to that of the Left-Right ordering explained above.

\begin{figure}[h!]
  \centering
  \begin{subfigure}{0.4\textwidth}
    \resizebox{0.9\textwidth}{!}{%
    \tikzstyle{edge} = [-,thick]
\tikzstyle{arrw} = [very thick,shorten <=2pt,shorten >=2pt]
\tikzstyle{var} = [draw,circle,inner sep=0,minimum width=0.5cm]
\tikzstyle{obs} = [draw,circle,inner sep=0,minimum width=0.5cm, fill=black!20]
  \begin{tikzpicture}[>=stealth,node distance=0.6cm]
    \begin{scope}
      \foreach \x in {0,1,2,3,4,5,6,7,8} {
        \foreach \y in {9} {
          \node at (\x,\y) (x\x\y) [var] {$\Rightarrow$};
        }
      }
      \foreach \x in {9} {
        \foreach \y in {9} {
          \node at (\x,\y) (x\x\y) [var] {$\Downarrow$};
        }
      }
      \foreach \x in {1,2,3,4,5,6,7,8,9} {
        \foreach \y in {8} {
          \node at (\x,\y) (x\x\y) [var] {$\Leftarrow$};
        }
      }
      \foreach \x in {0} {
        \foreach \y in {8} {
          \node at (\x,\y) (x\x\y) [var] {$\Downarrow$};
        }
      }
      \foreach \x in {0,1,2,3,4,5,6,7,8} {
        \foreach \y in {7} {
          \node at (\x,\y) (x\x\y) [var] {$\Rightarrow$};
        }
      }
      \foreach \x in {9} {
        \foreach \y in {7} {
          \node at (\x,\y) (x\x\y) [var] {$\Downarrow$};
        }
      }
      \foreach \x in {1,2,3,4,5,6,7,8,9} {
        \foreach \y in {6} {
          \node at (\x,\y) (x\x\y) [var] {$\Leftarrow$};
        }
      }
      \foreach \x in {0} {
        \foreach \y in {6} {
          \node at (\x,\y) (x\x\y) [var] {$\Downarrow$};
        }
      }
      \foreach \x in {0,1,2,3,4,5,6,7,8} {
        \foreach \y in {5} {
          \node at (\x,\y) (x\x\y) [var] {$\Rightarrow$};
        }
      }
      \foreach \x in {9} {
        \foreach \y in {5} {
          \node at (\x,\y) (x\x\y) [var] {$\Downarrow$};
        }
      }
      \foreach \x in {1,2,3,4,5,6,7,8,9} {
        \foreach \y in {4} {
          \node at (\x,\y) (x\x\y) [var] {$\Leftarrow$};
        }
      }
      \foreach \x in {0} {
        \foreach \y in {4} {
          \node at (\x,\y) (x\x\y) [var] {$\Downarrow$};
        }
      }
      \foreach \x in {0,1,2,3,4,5,6,7,8} {
        \foreach \y in {3} {
          \node at (\x,\y) (x\x\y) [var] {$\Rightarrow$};
        }
      }
      \foreach \x in {9} {
        \foreach \y in {3} {
          \node at (\x,\y) (x\x\y) [var] {$\Downarrow$};
        }
      }
      \foreach \x in {1,2,3,4,5,6,7,8,9} {
        \foreach \y in {2} {
          \node at (\x,\y) (x\x\y) [var] {$\Leftarrow$};
        }
      }
      \foreach \x in {0} {
        \foreach \y in {2} {
          \node at (\x,\y) (x\x\y) [var] {$\Downarrow$};
        }
      }
      \foreach \x in {0,1,2,3,4,5,6,7,8} {
        \foreach \y in {1} {
          \node at (\x,\y) (x\x\y) [var] {$\Rightarrow$};
        }
      }
      \foreach \x in {9} {
        \foreach \y in {1} {
          \node at (\x,\y) (x\x\y) [var] {$\Downarrow$};
        }
      }
      \foreach \x in {0,1,2,3,4,5,6,7,8,9} {
        \foreach \y in {0} {
          \node at (\x,\y) (x\x\y) [var] {$\Leftarrow$};
        }
      }
      \foreach \x in {0,1,2,3,4,5,6,7,8} {
        \pgfmathtruncatemacro\xend{\x+1}
        \foreach \y in {0,1,2,3,4,5,6,7,8,9} {
          \draw[edge] (x\x\y) -- (x\xend\y);
        }
      }
      \foreach \x in {0,1,2,3,4,5,6,7,8,9} {
        \foreach \y in {0,1,2,3,4,5,6,7,8} {
          \pgfmathtruncatemacro\yend{\y+1}
          \draw[edge] (x\x\y) -- (x\x\yend) {};
        }
      }
    \end{scope}
  \end{tikzpicture}
    }
    \label{fig:order}
  \end{subfigure}%
  ~
  \begin{subfigure}{0.4\textwidth}
    \resizebox{0.9\textwidth}{!}{%
\tikzstyle{edge} = [-,thick]
\tikzstyle{arrw} = [very thick,shorten <=2pt,shorten >=2pt]
\tikzstyle{var} = [draw,circle,inner sep=0,minimum width=0.5cm]
\tikzstyle{obs} = [draw,circle,inner sep=0,minimum width=0.5cm, fill=black!20]
 \begin{tikzpicture}[>=stealth,node distance=0.6cm]
    \begin{scope}
      \foreach \x in {0,1,2,3,4,5,6,7,8,9} {
        \foreach \y in {9} {
          \node at (\x,\y) (x\x\y) [var] {};
        }
      }
      \foreach \x in {0} {
        \foreach \y in {0,1,2,3,4,5,6,7,8} {
          \node at (\x,\y) (x\x\y) [var] {};
        }
      }
      \foreach \x in {0,1,2,3,4,5,6,7,8} {
        \foreach \y in {0} {
          \node at (\x,\y) (x\x\y) [var] {};
        }
      }
     \foreach \x in {8} {
        \foreach \y in {1,2,3,4,5,6,7} {
          \node at (\x,\y) (x\x\y) [var] {};
        }
      }
      \foreach \x in {2,3,4,5,6,7,8} {
        \foreach \y in {7} {
          \node at (\x,\y) (x\x\y) [var] {};
        }
      }
      \foreach \x in {2} {
        \foreach \y in {2,3,4,5,6} {
          \node at (\x,\y) (x\x\y) [var] {};
        }
      }     
      \foreach \x in {3,4,5,6} {
        \foreach \y in {2} {
          \node at (\x,\y) (x\x\y) [var] {};
        }
      }  
      \foreach \x in {6} {
        \foreach \y in {3,4,5} {
          \node at (\x,\y) (x\x\y) [var] {};
        }
      }   
      \foreach \x in {4,5} {
        \foreach \y in {5} {
          \node at (\x,\y) (x\x\y) [var] {};
        }
      }
      \foreach \x in {4} {
        \foreach \y in {4} {
          \node at (\x,\y) (x\x\y) [var] {};
        }
      }
      \foreach \x in {9} {
        \foreach \y in {0,1,2,3,4,5,6,7,8} {
          \node at (\x,\y) (x\x\y) [obs] {};
        }
      }
      \foreach \x in {1,2,3,4,5,6,7,8} {
        \foreach \y in {8} {
          \node at (\x,\y) (x\x\y) [obs] {};
        }
      }
      \foreach \x in {1} {
        \foreach \y in {1,2,3,4,5,6,7} {
          \node at (\x,\y) (x\x\y) [obs] {};
        }
      }
      \foreach \x in {2,3,4,5,6,7} {
        \foreach \y in {1} {
          \node at (\x,\y) (x\x\y) [obs] {};
        }
      }
      \foreach \x in {7} {
        \foreach \y in {2,3,4,5,6} {
          \node at (\x,\y) (x\x\y) [obs] {};
        }
      }
      \foreach \x in {3,4,5,6} {
        \foreach \y in {6} {
          \node at (\x,\y) (x\x\y) [obs] {};
        }
      }
      \foreach \x in {3} {
        \foreach \y in {3,4,5} {
          \node at (\x,\y) (x\x\y) [obs] {};
        }
      }
      \foreach \x in {4,5} {
        \foreach \y in {3} {
          \node at (\x,\y) (x\x\y) [obs] {};
        }
      }
      \foreach \x in {5} {
        \foreach \y in {4} {
          \node at (\x,\y) (x\x\y) [obs] {};
        }
      }
      \foreach \x in {0,1,2,3,4,5,6,7,8} {
        \pgfmathtruncatemacro\xend{\x+1}
        \foreach \y in {0,1,2,3,4,5,6,7,8,9} {
          \draw[edge] (x\x\y) -- (x\xend\y) {};
        }
      }
      \foreach \x in {0,1,2,3,4,5,6,7,8,9} {
        \foreach \y in {0,1,2,3,4,5,6,7,8} {
          \pgfmathtruncatemacro\yend{\y+1}
          \draw[edge] (x\x\y) -- (x\x\yend) {};
        }
      }
    \end{scope}
  \end{tikzpicture}
    \label{fig:blocking}
    }
  \end{subfigure}
  \caption{{\em Left:} Ordering of factors in the \pgas algorithm. At each iteration all the
    factors connecting the added node and the previous nodes are included in the target
    distribution. {\em Right:} The two block structures used in the tree-sampler and \pgas with partial blocking. Nodes are added from the edge spiralling in.}\label{fig:blockorder}
\end{figure}
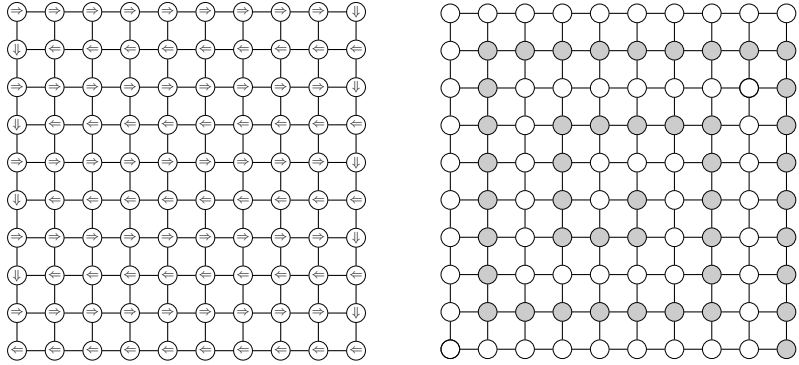

In Figure~\ref{fig:gmrfacf} we can see the empirical autocorrelation funtions (ACFs) centered around the true posterior mean for variable $x_{82}$ (selected randomly from $\X_\V$).
Similar results hold for all the variables of the model.  Due to the strong interaction between the
latent variables, the samples generated by the standard Gibbs sampler are strongly correlated.
Tree-sampling and \pgas with partial blocking show nearly identical gains compared to Gibbs. This is
interesting, since it suggest that simulating from the \smc-based \pgas kernel can be almost as
efficient as exact simulation, even using a moderate number of particles.  We emphasize that the
\pgas kernels leave their respective target distributions invariant, \ie the limiting distributions
is the same for all \mcmc schemes.  The fully blocked \pgas algorithm achieves the best ACF,
dropping off to zero considerably faster than for the other methods. This is not surprising since
this sampler simulates all the latent variables jointly which reduces the autocorrelation, in
particular when the latent variables are strongly dependent.


\begin{figure}[h!]
  \centering
  \begin{subfigure}{0.4\textwidth}
    \includegraphics[width=\textwidth]{./10x10acf}
    \caption{The empirical ACF for Gibbs sampler, \pgas, \pgas with partial blocking and tree
      sampler. Based on $100\thinspace000$ data points with $10\%$ burnin.}\label{fig:gmrfacf}
  \end{subfigure}%
  ~
  \begin{subfigure}{0.4\textwidth}
  \includegraphics[width=\textwidth]{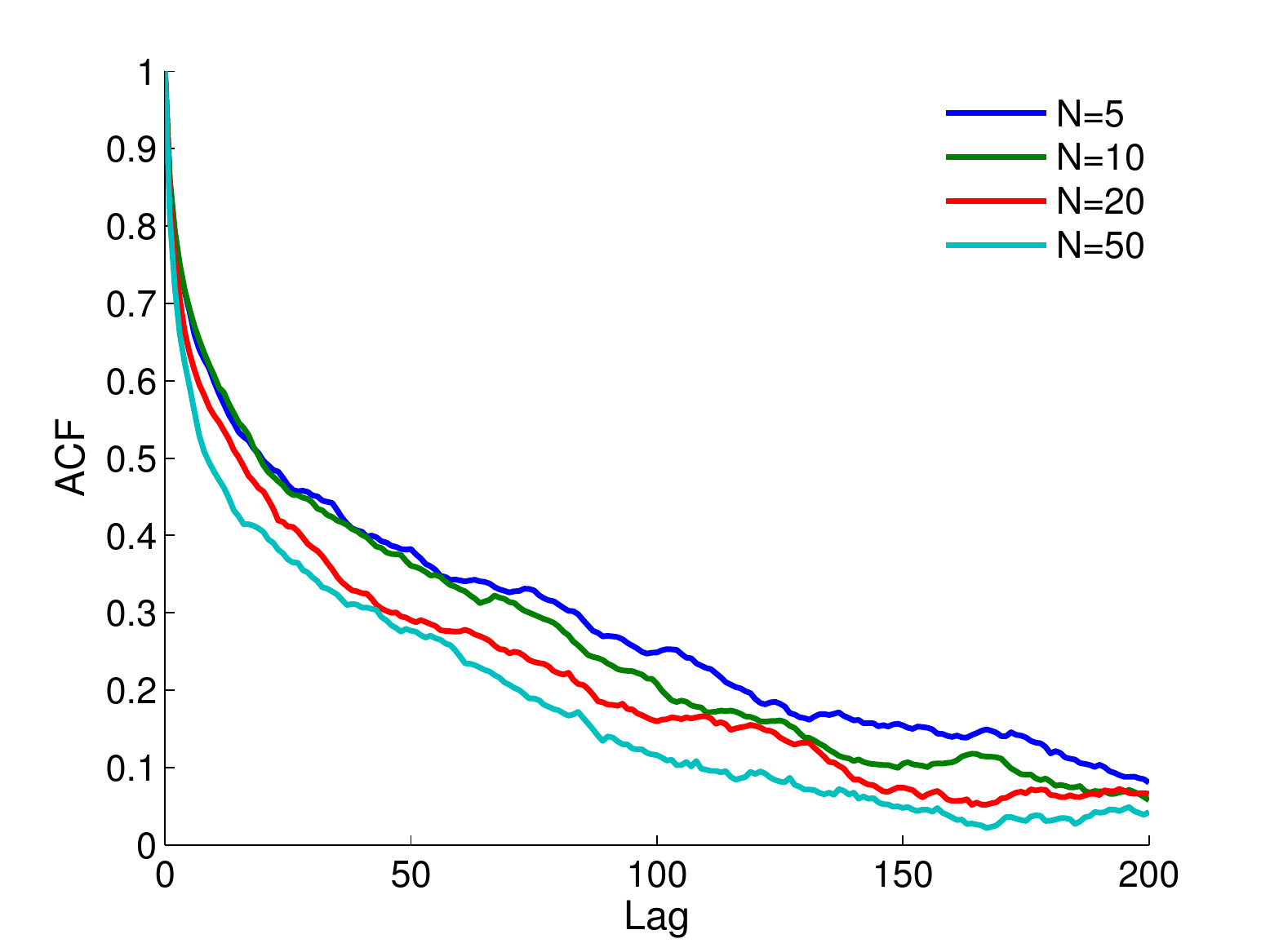}
    \caption{The empirical ACF for \pgas with partial blocking for $N=5,10,20,50$. Based on $10\thinspace000$ data points with $10\%$ burnin.}\label{fig:gmrfacf}
  \end{subfigure}%
\end{figure}

However, this improvement in autocorrelation comes at a cost. For the fully blocked \pgas kernel, the maximal cardinality of the
set $\A_k$ (see \eqref{eq:pgas:Ak-def}) is 10 (one full row of variables). For the partially blocked \pgas kernel, on the other hand, $|A_k| \equiv 1$ since the
variables in each block form a chain. This implies that the fully blocked \pgas sampler is an order of magnitude more computationally involved than
the partially blocked \pgas sampler. This trade-off between autocorrelation and computational efficiency has to be taken into account when
deciding which algorithm that is most suitable for any given problem.

\subsection{Proof of unbiasedness}
Recall that we use the convention $\xi_k = \X_{\ind_k \setminus \Ind_{k-1}}$. Define recursively the functions $f_k(\Xk) \equiv 1$ and,
\begin{align}
  \label{eq:f-def}
  f_{\ell}(\Xk[\ell]) = \frac{ \int f_{\ell+1}(\Xk[\ell+1]) \gamma_\ell(\Xk[\ell+1]) \rmd \xi_{\ell+1} }{\gamma_{\ell}(\Xk[\ell])}
\end{align}
for $\ell = k-1,\, k-2,\, \dots,\, 1$. Let
\begin{align*}
  Q_\ell = \left( \frac{1}{N} \sum_{i=1}^N  w_\ell^i f_\ell(\Xk[\ell]^i) \right) \left\{ \prod_{m = 1}^{\ell-1}  \frac{1}{N} \sum_{i=1}^N \nu_m^i w_m^i \right\},
\end{align*}
for $\ell \in \set{1}{k}$. Note that, by construction, $Q_k = \widehat Z_k^N$. Let $\filt_\ell$ be the filtration generated
by the particles simulated up to iteration $\ell$:
\begin{align*}
  \filt_\ell \eqdef \sigma( \{\Xk[m]^i, w_m^i\}_{i=1}^N, m = \range{1}{\ell}).
\end{align*}

\begin{lem}
  The sequence $\{ Q_\ell, \ell = \range{1}{k}\}$ is an $\filt_\ell$-martingale.
\end{lem}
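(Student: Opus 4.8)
The plan is to show $\mathbb{E}[Q_{\ell+1} \mid \filt_\ell] = Q_\ell$ for $\ell = 1, \dots, k-1$, which together with integrability immediately gives the martingale property. First I would condition on $\filt_\ell$ and isolate the $(\ell+1)$-th step of the algorithm. Given $\filt_\ell$, the prefactor $\prod_{m=1}^{\ell-1} \frac1N\sum_i \nu_m^i w_m^i$ in $Q_\ell$ is $\filt_\ell$-measurable, and so is the term $\frac1N\sum_i \nu_\ell^i w_\ell^i$ that appears when we pass from $Q_\ell$ to $Q_{\ell+1}$ (it becomes part of the product in $Q_{\ell+1}$). So the whole computation reduces to showing
\begin{align}
  \label{eq:plan-key}
  \mathbb{E}\!\left[ \frac1N \sum_{i=1}^N w_{\ell+1}^i f_{\ell+1}(\Xk[\ell+1]^i) \,\Big|\, \filt_\ell \right]
  = \left( \frac1N \sum_{j=1}^N \nu_\ell^j w_\ell^j \right) \cdot \frac1N \sum_{j=1}^N w_\ell^j f_\ell(\Xk[\ell]^j) \Big/ \left(\frac1N\sum_j \nu_\ell^j w_\ell^j\right),
\end{align}
i.e. that the conditional expectation of $\frac1N\sum_i w_{\ell+1}^i f_{\ell+1}(\Xk[\ell+1]^i)$ equals $\frac1N\sum_j w_\ell^j f_\ell(\Xk[\ell]^j)$ — the adjustment-weight normalizer cancels. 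Here $\Xk[\ell+1]^i = \Xk[\ell]^{a_{\ell+1}^i} \cup \xi_{\ell+1}^i$.

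The key step is to evaluate the conditional expectation over the two sources of randomness at step $\ell+1$: the ancestor index $a_{\ell+1}^i$, drawn with probability $\propto \nu_\ell^j w_\ell^j$, and the increment $\xi_{\ell+1}^i \sim r_{\ell+1}(\cdot \mid \Xk[\ell]^{a_{\ell+1}^i})$. Since the particles are conditionally i.i.d. given $\filt_\ell$, it suffices to handle one term $i$. Summing over the possible ancestor values $j$ and integrating against $r_{\ell+1}$, a single term becomes
\begin{align}
  \sum_{j=1}^N \frac{\nu_\ell^j w_\ell^j}{\sum_l \nu_\ell^l w_\ell^l} \int W_{\ell+1}(\Xk[\ell]^j \cup \xi) \, f_{\ell+1}(\Xk[\ell]^j \cup \xi) \, r_{\ell+1}(\xi \mid \Xk[\ell]^j)\, \rmd\xi .
\end{align}
Now substitute the weight function~\eqref{eq:smc:wf}: $W_{\ell+1}(\Xk[\ell+1]) = \gamma_{\ell+1}(\Xk[\ell+1]) / (\gamma_\ell(\Xk[\ell]) \nu_\ell(\Xk[\ell]) r_{\ell+1}(\xi_{\ell+1}\mid \Xk[\ell]))$. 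The $r_{\ell+1}$ factor cancels the proposal density, the $\nu_\ell(\Xk[\ell]^j) = \nu_\ell^j$ in the denominator cancels the $\nu_\ell^j$ from the ancestor probability, and we are left with
\begin{align}
  \frac{1}{\sum_l \nu_\ell^l w_\ell^l} \sum_{j=1}^N w_\ell^j \cdot \frac{1}{\gamma_\ell(\Xk[\ell]^j)} \int f_{\ell+1}(\Xk[\ell]^j \cup \xi)\, \gamma_{\ell+1}(\Xk[\ell]^j \cup \xi)\, \rmd\xi .
\end{align}
By the recursive definition~\eqref{eq:f-def} of $f_\ell$, the inner integral divided by $\gamma_\ell(\Xk[\ell]^j)$ is exactly $f_\ell(\Xk[\ell]^j)$ (using $\gamma_{\ell+1}(\Xk[\ell+1]) = \gamma_\ell(\Xk[\ell+1])$ in the notation of~\eqref{eq:f-def}, where the argument is written with one more coordinate). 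Hence each term equals $w_\ell^j f_\ell(\Xk[\ell]^j)/\sum_l \nu_\ell^l w_\ell^l$; averaging the $N$ conditionally-i.i.d. copies gives $\frac1N\sum_j w_\ell^j f_\ell(\Xk[\ell]^j) / \sum_l \nu_\ell^l w_\ell^l$. Multiplying by the $\filt_\ell$-measurable factor $\frac1N\sum_l \nu_\ell^l w_\ell^l$ that sits in the product of $Q_{\ell+1}$, and by the remaining $\filt_\ell$-measurable prefactor, yields $Q_\ell$ exactly. For the base case one checks $\mathbb{E}[Q_1] $ is handled the same way with $w_1^i = \gamma_1(\Xk[1]^i)/r_1(\Xk[1]^i)$ and no ancestor step.

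The main obstacle is bookkeeping rather than conceptual: one has to be careful about what is $\filt_\ell$-measurable versus random at step $\ell+1$, and about the slightly awkward notation in~\eqref{eq:f-def} where $\gamma_\ell$ is evaluated at an $(\ell+1)$-coordinate argument (meaning $\gamma_{\ell+1}$ restricted appropriately, consistent with the nested construction $\gamma_{\ell+1} = \gamma_\ell \cdot (q_{\ell+1}/q_\ell)\psi_{\ell+1}$). A minor point to address is integrability, i.e. that each $Q_\ell \in L^1$, which follows from the assumed finiteness of the normalizing constants $Z_\ell$ together with the fact that the $f_\ell$ are, by construction, ratios whose integrals telescope to $Z_k/Z_\ell < \infty$; one can note this in passing. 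Once the one-step identity is in place, the martingale claim is immediate.
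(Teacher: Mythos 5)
Your proposal is correct and follows essentially the same route as the paper's proof: condition on the filtration, exploit the conditional i.i.d.\ structure of the particles, sum over ancestor indices and integrate against the proposal so that the weight function~\eqref{eq:smc:wf} cancels both $r_{\ell+1}$ and the adjustment multipliers, and then invoke the recursion~\eqref{eq:f-def} to collapse the integral to $f_\ell$. The only differences are cosmetic (you step from $\ell$ to $\ell+1$ where the paper steps from $\ell-1$ to $\ell$), and your explicit remarks on the notational quirk in~\eqref{eq:f-def} and on integrability are welcome additions that the paper leaves implicit.
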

\begin{proof}
  Consider,
  \begin{align*}
    \EE\left[ Q_\ell \mid \filt_{\ell-1} \right] = \EE\left[ w_\ell^1 f_\ell(\Xk[\ell]^1) \mid \filt_{\ell-1} \right]\left\{ \prod_{m = 1}^{\ell-1}  \frac{1}{N} \sum_{i=1}^{N} \nu_m^i w_m^i \right\}.
  \end{align*}
  Using the definition of the weight function (see the main document) we have,
  \begin{align*}
    &\EE\left[ W_\ell(\Xk[\ell]^1) f_\ell(\Xk[\ell]^1) \mid \filt_{\ell-1} \right] \\
    &\qquad = \sum_{i=1}^N \int W_\ell( \{ \Xk[\ell-1]^i \cup \xi_\ell \} ) f_\ell( \{ \Xk[\ell-1]^i \cup \xi_\ell \} )
    \frac{\nu_{\ell-1}^i w_{\ell-1}^i }{\sum_l \nu_{\ell-1}^l w_{\ell-1}^l } r_\ell(\xi_\ell | \Xk[\ell-1]^i) \rmd \xi_\ell \\
    &\qquad = \frac{1}{\sum_l \nu_{\ell-1}^l w_{\ell-1}^l }  \sum_{i=1}^N \left( \nu_{\ell-1}^i w_{\ell-1}^i
    \frac{ \int \gamma_\ell( \{ \Xk[\ell-1]^i \cup \xi_\ell \} ) f_\ell( \{ \Xk[\ell-1]^i \cup \xi_\ell \} ) \rmd \xi_\ell }{ \gamma_{\ell-1}(\Xk[\ell-1]^i)  \nu_{\ell-1}(\Xk[\ell-1]^i) } \right) \\
    &\qquad = \frac{1}{\sum_l \nu_{\ell-1}^l w_{\ell-1}^l }  \sum_{i=1}^N \left( w_{\ell-1}^i f_{\ell-1}(\Xk[\ell-1]^i) \right).
  \end{align*}
  Hence, we get
  \begin{align*}
    \EE\left[ Q_\ell \mid \filt_{\ell-1} \right] = 
     \sum_i \left( w_{\ell-1}^i f_{\ell-1}(\Xk[\ell-1]^i) \right)   \frac{1}{N}   \left\{ \prod_{m = 1}^{\ell-2}  \frac{1}{N} \sum_i \nu_m^i w_m^i \right\} = Q_{\ell-1}.
  \end{align*}
\end{proof}
It follows that
\begin{align*}
  \EE[ \widehat Z_k^N ] = \EE[Q_k] = \EE[Q_1] = \int W_1(\Xk[1]) f_1(\Xk[1]) r_1(\Xk[1]) \rmd \Xk[1] = \int \gamma_1(\Xk[1]) f_1(\Xk[1]) \rmd \Xk[1].
\end{align*}
However, from the definition in \eqref{eq:f-def} we have that
\begin{align*}
  \int \gamma_1(\Xk[1]) f_1(\Xk[1]) \rmd \Xk[1] &= \iint \gamma_2(\Xk[2]) f_2(\Xk[2]) \rmd \Xk[2] \\
  &= \cdots  = \int\tcdots\int \gamma_k(\Xk) f_{k}(\Xk) \rmd \Xk = Z_k.
\end{align*}
\hfill\qedsymbol

\subsection{Ancestor sampling}
To implement the \pgas sampling procedure, it remains to detail the ancestor sampling step.
At each iteration $k \geq 2$, this step amount to generating a value for the ancestor index $a_k^N$ corresponding to the reference particle.
Implicitly, this assigns an artificial history for the ``remaining'' part of the reference
particle $\X_{\Ipgas \setminus \Ind_{k-1}}^\prime$, by selecting one of the particles $\{ \Xk[k-1]^i \}_{i=1}^N$ as its ancestor. This
results in a complete assignment for the collection of latent variables of the \pgm
\begin{align}
  \label{eq:pgas:concatenated-particle}
  \XtV_{\Ipgas}^i \eqdef \{ \Xk[k-1]^{i} \cup \X_{\Ipgas \setminus \Ind_{k-1}}^\prime \} \in \setXJ.
\end{align}
As shown by \citet{LindstenJS:2014}, the probability distribution from which $a_k^N$ should be sampled in order to ensure reversibility of the \pgas kernel \wrt $\bar\gamma_K$ is given by,
\begin{align}
  \label{eq:pgas:weights}
\Prb(a_k^N = i) \propto w_{k-1}^i
\frac{\displaystyle \gammaK \left( \XtV_{\Ipgas}^i \right)}{\displaystyle \gamma_{k-1} (\Xk[k-1]^i)}.
\end{align}
This expression can be understood as an application of Bayes' theorem, where $w_{k-1}^i$ is the prior probability of particle $\Xk[k-1]^i$
and the ratio between the target densities is the unnormalized likelihood of 
$\X_{\Ipgas \setminus \Ind_{k-1}}^\prime$
conditionally on $\Xk[k-1]^i$.

To derive an explicit expression for the ancestor sampling probabilities in our setting, note first that,
\begin{align}
  \frac{\gammaK \left( \X_{\Ipgas} \right)}{\gamma_{k-1} \left( \Xk[k-1] \right)} =
  \frac{ \prod_{j = k}^K \psi_j \left(\xk[j] \right) }{ q_{k-1}(\Xk[k-1]) }.
\end{align}
Now, let $\A_k$ be the index set of factors $\psi_j$, $j \geq k$ for which any of the variables $\Xk[k-1]$ is in the domain of $\psi_j$; formally
\begin{align}
  \label{eq:pgas:Ak-def}
  \A_k \eqdef \{  j : k \leq j \leq K, \Ind_{k-1} \cap \ind_j \neq \emptyset \}.
\end{align}
It follows that any factor $\psi_j$ for which $j \not\in \A_k $ is independent of $\Xk[k-1]$. Consequently, we can write \eqref{eq:pgas:weights}
as
\begin{align}
  \label{eq:pgas:weights-2}
  \Prb(a_k^N = i) \propto w_{k-1}^i
  \frac{ \prod_{j \in \A_k} \psi_j \left( \XtV_{\ind_j}^i \right) }{ q_{k-1}(\Xk[k-1]^i) },
\end{align}
where $\XtV_{\ind_j}^i$ is a subset of the variables in \eqref{eq:pgas:concatenated-particle}. 
In fact, the index set $\A_k$ corresponds exactly to the factors $\psi_j$ that depend, explicitly, both on the particle $\Xk[k-1]^i$
and on the reference particle $\X_{\Ipgas \setminus \Ind_{k-1}}^\prime$ (through some of their respective components). Indeed, it is only these factors
that hold any information about the likelihood of $\X_{\Ipgas \setminus \Ind_{k-1}}^\prime$ given $\Xk[k-1]^i$.

The expression \eqref{eq:pgas:weights-2} is interesting, since it shows that the computational complexity of the ancestor
sampling step will depend on the cardinality of the set $\A_k$.
While this will depend both on the structure of the graph and on the ordering of the factors, it will for many models of interest be of a lower order than the
cardinality of $\V$.


\renewcommand{\baselinestretch}{0.95}

\def\bibfont{\small}
\bibliographystyle{unsrtnat}
\bibliography{smcforpgm}

\end{document}